\documentclass{amsart}

\pdfpagewidth=8.5in
\pdfpageheight=11in

\usepackage{graphicx}          
\usepackage{graphicx}          

\include{diagrams}
\usepackage{diagrams}
\usepackage{tikz}
\usetikzlibrary{automata}
\usepackage{subfigure}
\usepackage{amssymb,latexsym,amsfonts,amsmath}
\usepackage{graphicx}
\usepackage[boxed]{algorithm2e}
\newtheorem{theorem}{Theorem}[section]
\newtheorem{lemma}[theorem]{Lemma}
\newtheorem{problem}[theorem]{Problem}
\newtheorem{proposition}[theorem]{Proposition}

\newtheorem{definition}[theorem]{Definition}

\newcommand{\delay}{\mathrm{delay}}
\newcommand{\ctrl}{\mathrm{ctrl}}
\newcommand{\send}{\mathrm{send}}
\newcommand{\scc}{\mathrm{sc}}
\newcommand{\ca}{\mathrm{ca}}
\newcommand{\req}{\mathrm{req}}
\newcommand{\alt}{\mathrm{alt}}

\include{diagrams}
\usepackage{diagrams}
\usepackage{tikz}
\usetikzlibrary{automata}
\usepackage{subfigure}
\usepackage{amssymb,latexsym,amsfonts,amsmath}
\usepackage{graphicx}

\begin{document}


%

\begin{abstract}
The research area of Networked Control Systems (NCS) has been the topic of intensive study in the last decade. In this paper we give a contribution to this research line by addressing symbolic control design of (possibly unstable) nonlinear NCS with specifications expressed in terms of automata. We first derive symbolic models that are shown to approximate the given NCS in the sense of (alternating) approximate simulation. We then address symbolic control design with specifications expressed in terms of automata. We finally derive efficient algorithms for the synthesis of the proposed symbolic controllers that cope with the inherent computational complexity of the problem at hand.
\end{abstract}

\title{Integrated Symbolic Design of \\
Unstable Nonlinear Networked Control Systems}\thanks{$^{\ast}$ The research leading to these results has been partially supported by the Center of Excellence DEWS and received funding from the European Union Seventh Framework Programme [FP7/2007-2013] under grant agreement n. 257462 HYCON2 Network of excellence.}\thanks{$^{\ast}$ E-mail: alessandro.borri@iasi.cnr.it, \{giordano.pola,mariadomenica.dibenedetto\}@univaq.it}
\thanks{$^{1}$ Istituto di Analisi dei Sistemi ed Informatica ``A.
Ruberti'', Consiglio Nazionale delle Ricerche (IASI-CNR), 00185 Rome, Italy}
\thanks{$^{2}$ Department of Information Engineering, Computer Science and Mathematics, Center of Excellence DEWS,
University of L{'}Aquila, 67100 L{'}Aquila, Italy}

%
\author[Alessandro Borri, Giordano Pola, Maria D. Di Benedetto]{
Alessandro Borri$^{1}$, Giordano Pola$^{2}$, Maria D. Di Benedetto$^{2}$}
\maketitle


\section{Introduction}

Networked Control Systems (NCS) are complex, heterogeneous, spatially distributed systems where physical processes interact with distributed computing units through non--ideal communication networks. The complexity and heterogeneity of such systems is given by the interaction of at least three components: a plant process that is often described by continuous dynamics, a controller implementing algorithms on microprocessors for the control of the plant, and a communication network conveying information between the plant and the controller which is often characterized by non-idealities such as variable sampling/transmission intervals, variable communication delays, quantization errors, packet dropouts, communication protocol and limited bandwidth. In the last decade, NCS have been the object of great interest in the research community and important results have been achieved, see e.g. \cite{HeemelsSurvey} and the references therein. Most of the results on NCS mainly deals with stabilization problems under an imperfect communication network comprising a subset of the aforementioned communication non-idealities. The work in \cite{BorriHSCC12} instead, considers all the aforementioned communication non-idealities and proposes control algorithms for solving problems with complex specifications expressed in terms of automata.
The main drawbacks of the results reported in \cite{BorriHSCC12} are:

\begin{itemize}
\item[(i)] The plant in the NCS is supposed to be stable, which is quite restrictive in many application domains of interest.
\item[(ii)] The controllers proposed require a large computational complexity in their design.
\end{itemize}

The present work improves the results established in \cite{BorriHSCC12} in two directions:

\begin{itemize}
\item[(i')] We extend our results to possibly unstable nonlinear networked control systems;
\item[(ii')] We design efficient algorithms that cope with the computational complexity of the approach in \cite{BorriHSCC12}.
\end{itemize}

For (i') we generalize the results reported in \cite{MajidTAC11} from nonlinear control systems to nonlinear \textit{networked} control systems. For (ii') we generalize the control algorithms we proposed in \cite{PolaTAC12} for stable nonlinear control systems to \textit{unstable} nonlinear \textit{networked} control systems. 

\section{Notation}\label{sec:Notation}
The symbols $\mathbb{N}$, $\mathbb{N}_0$, $\mathbb{Z}$, $\mathbb{R}$, $\mathbb{R}^{+}$ and $\mathbb{R}_{0}^{+}$ denote the set of natural, nonnegative integer, integer, real, positive real, and nonnegative real numbers, respectively. Given a set $A$ we denote $A^{2}=A\times A$ and $A^{n+1}=A\times A^{n}$ for any $n\in \mathbb{N}$. Given an interval $[a,b]\subseteq \mathbb{R}$ with $a\leq b$ we denote by $[a;b]$ the set $[a,b]\cap \mathbb{N}$. We denote by $\lceil x \rceil=\min\{{n\in\mathbb{Z} \vert n\geq x}\}$ the ceiling of a real number $x$. Given a vector $x\in\mathbb{R}^{n}$ we denote by $\Vert x\Vert$ the infinity norm and by $\Vert x\Vert_{2}$ the Euclidean norm of $x$.
Given $\mu\in\mathbb{R}^{+}$ and $A\subseteq \mathbb{R}^{n}$, we set $[A]_{\mu}=\mu\mathbb{Z}^{n} \cap A$; if $B=\bigcup_{i\in [1;N]}A^{i}$ then $[B]_{\mu}=\bigcup_{i\in [1;N]} ([A]_{\mu})^{i}$. Consider a bounded set $A \subseteq \mathbb{R}^n$ with interior. Let $H=[a_1,b_1]\times[a_2,b_2]\times \dots \times [a_n,b_n]$ be the smallest hyperrectangle containing $A$ and set $\hat{\mu}_{A}=\min_{i=1,2,\dots,n} (b_i-a_i)$. It is readily seen that for any $\mu \leq \hat{\mu}_A$ and any $a\in A$ there always exists $b\in [A]_{\mu}$ such that $\Vert a-b \Vert \leq \mu$.
Given $a\in A\subseteq \mathbb{R}^{n}$ and a precision $\mu\in\mathbb{R}^{+}$, the symbol $[a]_{\mu}$ denotes a vector in $\mu \, \mathbb{Z}^{n}$ such that $\Vert a-[a]_{\mu} \Vert \leq \mu/2$. Any vector $[a]_{\mu}$ with $a\in A$ can be encoded by a finite binary word of length 
$ \lceil\log_{2} \vert [A]_{\mu} \vert \rceil$.
Given a pair of sets $A$ and $B$ and a relation $\mathcal{R}\subseteq A\times B$, the symbol $\mathcal{R}^{-1}$ denotes the inverse relation of $\mathcal{R}$, i.e. $\mathcal{R}^{-1}=\{(b,a)\in B\times A:( a,b)\in \mathcal{R}\}$. The cardinality of a finite set $A$ is denoted by $|A|$.

\section{Networked Control Systems}\label{sec:modelingNCS}
\begin{figure}
\begin{center}
\includegraphics[scale=1]{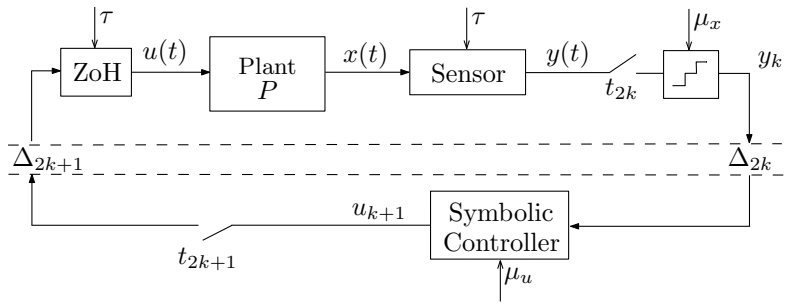}
\caption{Networked control system.}
\label{NCSpic}
\end{center}
\end{figure}
The class of Network Control Systems (NCS) that we consider in this paper has been introduced in \cite{BorriHSCC12}. In this section we briefly review this model. For more details the interested reader is referred to \cite{BorriHSCC12}. The network scheme of the NCS is depicted in Figure \ref{NCSpic}. The direct branch of the network includes the plant $P$, that is a nonlinear control system of the form:
\begin{equation}
\left\{
\begin{array}
{l}
\dot{x}(t)=f(x(t),u(t)),\\
x\in X\subseteq \mathbb{R}^{n},\\
x(0)\in X_{0}\subseteq X,\\
u(\cdot)\in \mathcal{U},
\end{array}
\right.
\label{NCSeq1}
\end{equation}
where $x(t)$ and $u(t)$ are the state and the control input at time $t\in\mathbb{R}^{+}_{0}$, $X$ is the state space, $X_{0}$ is the set of initial states and $\mathcal{U}$ is the set of control inputs that are supposed to be piecewise--constant functions of time from intervals of the form \mbox{$]a,b[\subseteq\mathbb{R}$} to $U\subseteq \mathbb{R}^{m}$. We suppose that sets $X$ and $U$ are convex, bounded and with interior.
 The function $f:X\times U \rightarrow X$ is such that $f(0,0)=0$ and assumed to be Lipschitz on compact sets.
In the sequel we denote by $\mathbf{x}(t,x_{0},u)$ the state reached by (\ref{NCSeq1}) at time $t$ under the control input $u$ from the initial state $x_{0}$; this point is uniquely determined, since the assumptions on $f$ ensure existence and uniqueness of trajectories. We assume that the control system $P$ is forward complete, namely that every trajectory is defined on an interval of the form $]a,\infty\lbrack$.
On the two sides of the plant $P$ in Figure \ref{NCSpic}, a Zero-order-Holder (ZoH) and a (ideal) sensor are placed. We assume that the ZoH and the sensor are synchronized and update their output values at times that are integer multiples of the same interval $\tau\in\mathbb{R}^{+}$, i.e. $u(s \tau+t)=u(s\tau)$, $y(s \tau+t)=y(s\tau)=x(s\tau)$, $t\in [0,\tau[$, $s\in \mathbb{N}_0$, where $s$ is the index of the sampling interval (starting from $0$).
The evolution of the NCS is described iteratively in the following, starting from the initial time $t=0$. Consider the $k$--th iteration in the feedback loop.
The sensor requests access to the network and after a waiting time $\Delta_{2k}^{\req}\in [0, \Delta_{\max}^{\req}]$, it sends at time $t_{2k}$ the latest available sample $y_k=[y(t_{2k})]_{\mu_{x}}$ where $\mu_{x}$ is the precision of the quantizer that follows the sensor in the NCS scheme in Figure \ref{NCSpic}.
The \emph{sensor-to-controller (sc)} link of the network introduces a delay $\Delta_{2k}= \Delta_{\send}^{\scc} + \Delta_{2k}^{\delay}$, with $\Delta_{2k}^{\delay}\in [\Delta^{\delay}_{\min},\Delta_{\max}^{\delay}]$, where $\Delta_{\send}^{\scc} = \lceil\log_{2} \vert [X]_{\mu_{x}} \vert \rceil /B_{\max}$ is the minimum time required to send the information over the sensor-to-controller branch, assuming a digital communication channel of bandwitdh $B_{\max}\in\mathbb{R}^{+}$ (expressed in bits per second (bps)). The maximum network delay  $\Delta_{\max}^{\delay}$ takes into account congestion, other accesses to the communication channel, any kind of scheduling protocol and a finite number of subsequent packet dropouts, which is assumed to be uniformly bounded. After that time, the sensor sample reaches the symbolic controller, that is expressed in terms of the function $C:[X]_{\mu_{x}} \rightarrow[U]_{\mu_{u}}$,
with $\mu_{x}\leq \hat{\mu}_X$ and $\mu_{u}\leq \hat{\mu}_U$ so that the domain and co--domain of $C$ are non--empty. After a time $\Delta^{\ctrl}_{k}\in [\Delta^{\ctrl}_{\min},\Delta^{\ctrl}_{\max}]$, the value $u_{k+1}=C(y_k)$ is returned and it is sent through the network at time $t_{2k+1}$ (after a bounded waiting time $\Delta_{2k+1}^{\req}\in [0, \Delta_{\max}^{\req}]$).
The \emph{controller-to-actuator (ca)} link of the network introduces a delay $\Delta_{2k+1}=\Delta_{\send}^{\ca} + \Delta_{2k+1}^{\delay}$, where $\Delta_{2k+1}^{\delay}\in [\Delta^{\delay}_{\min},\Delta_{\max}^{\delay}]$ and $\Delta_{\send}^{\ca} = \lceil\log_{2} \vert [U]_{\mu_{u}} \vert \rceil/B_{\max}$ is the minimum time required to send the information over the controller-to-actuator branch of the network. After that time, the sample reaches the ZoH and at time $t=A_{k+1} \tau$ the ZoH is refreshed to the control value $u_{k+1}$, with $ A_{k+1}=\lceil (t_{2k+1}+\Delta_{2k+1})/\tau\rceil$. The next iteration starts and the sensor requests access to the network again.
Consider now the sequence of control values $\{u_{k}\}_{k\in\mathbb{N}_0}$. Each value is held for $N_k=A_{k+1}-A_{k}$ sampling intervals. Due to the bounded delays, one gets $N_k \in [{N}_{\min}; {N}_{\max}]$, with:
\begin{equation}
{N}_{\min} =\left\lceil \Delta_{\min}/\tau \right\rceil, \quad {N}_{\max}  = \left\lceil \Delta_{\max}/\tau \right \rceil,
\label{eq:N_min-max}
\end{equation}
where we set $\Delta_{\min}=\Delta_{\send}^{\scc}+\Delta_{\min}^{\ctrl}+\Delta_{\send}^{\ca}+2\Delta_{\min}^{\delay}$, $\Delta_{\max}=\Delta_{\send}^{\scc}+\Delta_{\max}^{\ctrl}+\Delta_{\send}^{\ca}+2\Delta^{\req}_{\max}+2\Delta_{\max}^{\delay}$.
In the sequel we refer to the described NCS by $\Sigma$ and to a trajectory of $\Sigma$ with initial state $x_0$ and control input $u$ by $\mathbf{x}(.,x_{0},u)$.

\section{Systems, Approximate Equivalence and Composition}\label{subsec:ApproxEquiv}
We use the notion of system as a unified mathematical framework to describe NCS as well as their symbolic models.

\begin{definition}
\cite{paulo}
A system $S$ is a sextuple $S=(X,X_{0},U,\rTo,Y,H)$
consisting of:
\begin{itemize}
\item a set of states $X$;
\item a set of initial states $X_{0}\subseteq X$;
\item a set of inputs $U$;
\item a transition relation $\rTo \subseteq X\times U\times X$;
\item a set of outputs $Y$;
\item an output function $H:X\rightarrow Y$.
\end{itemize}
A transition $(x,u,x^{\prime})\in\rTo$ is denoted by $x\rTo^{u}x^{\prime}$. For such a transition, state $x^{\prime}$ is called a $u$-successor, or simply a successor, of state $x$.
\end{definition}

A state run of $S$ is a (possibly infinite) sequence of transitions $x_{0} \rTo^{u_{1}} x_{1} \rTo^{u_{2}} \dots$ with $x_0\in X_0$. An output run is a (possibly infinite) sequence $\{y_i\}_{i\in\mathbb{N}_0}$ such that there exists a state run with $y_i=H(x_i)$, $i\in\mathbb{N}_0$.
System $S$ is said to be:
\begin{itemize}
\item \textit{countable} if $X$ and $U$ are countable sets;
\item \textit{symbolic} if $X$ and $U$ are finite sets;
\item \textit{metric} if the output set $Y$ is equipped with a metric $d:Y\times Y\rightarrow\mathbb{R}_{0}^{+}$;
\item \textit{deterministic} if for any $x\in X$ and $u\in U$ there exists at most one state $x^{\prime}\in X$ such that $x \rTo^{u} x^{\prime}$ for some $u\in U$;
\item \textit{non--blocking} if for any $x\in X$ there exists at least one state $x^{\prime}\in X$ such that $x \rTo^{u} x^{\prime}$ for some $u\in U$;
\item \textit{accessible}, if for any $x\in X$ there exists a finite number of transitions $x_{0} \rTo^{u_{1}} x_{1} \rTo^{u_{2}} \ldots \rTo^{u_{N}} x$ from an initial state $x_{0}\in X_{0}$ to state $x$.
\end{itemize}
\begin{definition}
Given two systems $S_{i}=(X_{i},X_{0,i},U_{i},$ $\rTo_{i},Y_{i},H_{i})$ ($i=1,2$), $S_{1}$ is a \textit{sub--system} of $S_{2}$, denoted $S_{1} \sqsubseteq S_{2}$, if $X_{1}\subseteq X_{2}$, $X_{0,1}\subseteq X_{0,2}$, $U_{1}\subseteq U_{2}$, $\rTo_{1}\subseteq \rTo_{2}$, $Y_{1}\subseteq Y_{2}$, and $H_{1}(x)=H_{2}(x)$ for any $x\in X_{1}$.
\end{definition}

In the sequel we consider (alternating) approximate simulation relations \cite{paulo} to relate properties of NCS and symbolic models.

\begin{definition}
\cite{AB-TAC07,PolaSIAM2009}
\label{ASR}
Let $S_{i}=(X_{i},X_{0,i},U_{i},\rTo_{i},$ $Y_{i},H_{i})$ ($i=1,2$) be metric systems with the same output sets $Y_{1}=Y_{2}$ and metric $d$, and let $\varepsilon\in\mathbb{R}^{+}_{0}$ be a given precision. Consider a relation $\mathcal{R}\subseteq X_{1}\times X_{2}$ satisfying the following conditions:
\begin{itemize}
\item[(i)] $\forall x_{1}\in X_{0,1}$ $\exists x_{2}\in X_{0,2}$ such that $(x_{1},x_{2})\in \mathcal{R}$;
\item[(ii)] $\forall (x_{1},x_{2})\in \mathcal{R}$, \mbox{$d(H_{1}(x_{1}),H_{2}(x_{2}))\leq\varepsilon$}.
\end{itemize}

Relation $\mathcal{R}$ is an \emph{$\varepsilon$--approximate simulation relation} from $S_{1}$ to $S_{2}$ if it enjoys conditions (i), (ii) and the following one:
\begin{itemize}
\item[(iii)] $\forall (x_{1},x_{2})\in \mathcal{R}$ \mbox{$\forall x_{1}\rTo_{1}^{u_{1}}x'_{1}$} \mbox{$\exists x_{2}\rTo_{2}^{u_{2}}x'_{2}$} such that $(x^{\prime}_{1},x^{\prime}_{2})\in \mathcal{R}$.
\end{itemize}
System $S_{1}$ is $\varepsilon$--simulated by $S_{2}$ or $S_{2}$ $\varepsilon$--simulates $S_{1}$, denoted \mbox{$S_{1}\preceq_{\varepsilon}S_{2}$}, if there exists an $\varepsilon$--approximate simulation relation from $S_{1}$ to $S_{2}$.
Relation $\mathcal{R}$ is an \emph{alternating $\varepsilon$--approximate ($A\varepsilon A$) simulation relation} from $S_{1}$ to $S_{2}$ if it enjoys conditions (i), (ii) and the following one:
\begin{itemize}
\item[(iii$'$)] $\forall (x_{1},x_{2})\in\mathcal{R}$ $\forall u_{1}\in U_{1}$ $\exists u_{2}\in U_{2}$ \mbox{$\forall x_{2}\rTo_{2}^{u_{2}}x'_{2}$} \mbox{$\exists x_{1}\rTo_{1}^{u_{1}}x'_{1}$} such that $(x_{1}^{\prime},x_{2}^{\prime} )\in\mathcal{R}$.
\end{itemize}
System $S_{1}$ is alternating $\varepsilon$--simulated by $S_{2}$ or $S_{2}$ alternating $\varepsilon$--simulates $S_{1}$, denoted \mbox{$S_{1}\preceq_{\varepsilon}^{\alt} S_{2}$}, if there exists an $A\varepsilon A$ simulation relation from $S_{1}$ to $S_{2}$.
\end{definition}
For more details on the above notions we refer to \cite{paulo,AB-TAC07,PolaSIAM2009}.
We conclude this section with the notion of approximate feedback composition, that is employed in the sequel to capture feedback interaction between non-deterministic systems and symbolic controllers.

\begin{definition}
\cite{paulo}
\label{composition}
Consider a pair of metric systems \mbox{$S_{i}=(X_{i},X_{0,i},U_{i},\rTo_{i},Y_{i},H_{i})$} ($i=1,2$) with the same output sets $Y_{1}=Y_{2}$ and metric $d$. Let $\mathcal{R}$ be an $A \theta A$ simulation relation from $S_{2}$ to $S_{1}$. The $\theta$--approximate feedback composition of $S_{1}$ and $S_{2}$, with composition relation $\mathcal{R}$, is the system $S_{1}\times^{\mathcal{R}}_{\theta}S_{2}=(X,X_{0},U,\rTo,Y,H)$, where
\begin{itemize}
\item $X=\mathcal{R}^{-1}$;
\item $X_{0}=X\cap(X_{0,1}\times X_{0,2})$;
\item $U=U_{1}$;
\item $(x_{1},x_{2})\rTo^{u_1}(x_{1}^{\prime},x_{2}^{\prime})$ if $x_{1}\rTo_{1}^{u_1}x_{1}^{\prime}$ and  $x_{2}\rTo_{2}^{u_2}x_{2}^{\prime}$; 
\item $Y=Y_{1}$;
\item $H(x_{1},x_{2})=H_{1}(x_{1})$ for any $(x_{1},x_{2})\in X$.
\end{itemize}
\end{definition}


\section{Symbolic Models for NCS}\label{sec:SymbolicModels}

In this section we propose symbolic models that approximate NCS in the sense of (alternating) approximate simulation.
For notational simplicity we denote by $u$ any constant control input $\tilde{u}\in\mathcal{U}$ s.t. $\tilde{u}(t)=u$ at all times $t\in\mathbb{R}^{+}$. Set $X_e=\cup_{N\in [N_{\min};N_{\max}]} X^{N}$.

\begin{definition}
\cite{BorriHSCC12}
Given the NCS $\Sigma$, consider the system
\[
S(\Sigma)=(X_{\tau},X_{0,\tau},U_{\tau},\rTo_{\tau},Y_{\tau},H_{\tau})
\]
where:

\begin{itemize}
\item $X_{\tau}$ is the subset of $ X_0 \cup X_e$ such that for any $x=(x_{1},x_{2},...,x_{N})\in X_{\tau}$, with $N\in [N_{\min};N_{\max}]$, the following conditions hold:
    \begin{align}
    & x_{i+1}  =\mathbf{x}(\tau,x_{i},u^{-}), \quad i\in [1;N-2]); \label{eq:states_1}\\
    & x_{N} =\mathbf{x}(\tau,x_{N-1},u^{+}); \label{eq:states_2}
    \end{align}
for some constant functions $u^{-}$, $u^{+}\in [U]_{\mu_{u}}$;
\item $X_{0,\tau}=X_0$;
\item $U_{\tau}=[U]_{\mu_{u}}$;
\item $x^{1}\rTo^{u}_{\tau} x^{2}$, where
\[
\left\{
\begin{array}
{l}
x_{i+1}^{1} = \mathbf{x}(\tau,x_{i}^{1},u^{-}_{1}), i\in [1;N_{1}-2];\\
x_{N_{1}}^{1} = \mathbf{x}(\tau,x^1_{N_{1}-1},u^{+}_{1});
\\
x_{i+1}^{2} = \mathbf{x}(\tau,x_{i}^{2},u^{-}_{2}), i\in [1;N_{2}-2]; \\
x_{N_{2}}^{2} = \mathbf{x}(\tau,x^2_{N_{2}-1},u_2^{+}) ;
\\
u_2^{-} = u_1^{+}; \\
u_2^{+} = u; \\
x^{2}_{1} = \mathbf{x}(\tau,x^1_{N_{1}},u^{-}_{2}) ;
\end{array}
\right.
\]
for some $N_{1},N_{2}\in [N_{\min};N_{\max}]$;
\item $Y_{\tau}=X_{\tau}$;
\item $H_{\tau}(x)=x$.
\end{itemize}

\end{definition}

Note that $S(\Sigma)$ is non-deterministic because, depending on the values of $N_{2}$, more than one $u$--successor of $x^{1}$ may exist.
Since the state vectors of $S(\Sigma)$ are built from trajectories of $\Sigma$ sampled every $\tau$ time units, $S(\Sigma)$ collects all the information of the NCS $\Sigma$ available at the sensor (see Figure \ref{NCSpic}) as formally stated in Theorem 5.1 of \cite{BorriHSCC12}.
System $S(\Sigma)$ can be regarded as metric with the metric $d_{Y_{\tau}}$ on $Y_{\tau}$ naturally induced by the metric $d_{X}(x_{1},x_{2})=\Vert x_{1} - x_{2}\Vert$ on $X$, as follows. Given any $x^{i}=(x_{1}^{i},x_{2}^{i},...,x_{N_{i}}^{i})$, $i=1,2$, we set $d_{Y_{\tau}}(x^{1},x^{2})=
\max_{i\in[1;N]}\Vert x^{1}_{i}-x^{2}_{i}\Vert$, if $N_{1}= N_{2}=N$ and $d_{Y_{\tau}}(x^{1},x^{2})=+\infty$, otherwise.
Although system $S(\Sigma)$ contains all the information of the NCS $\Sigma$ available at the sensor, it is not a finite model. In the following, we propose a system that approximates $S(\Sigma)$ and is symbolic. A key property for our developments is the notion of incremental forward completeness, as recalled hereafter.

\begin{definition}
\label{dFC}
\cite{MajidTAC11}
Control system (\ref{NCSeq1}) is incrementally forward complete ($\delta$-FC) if it is forward complete and there exists a continuous function \mbox{$\beta: \mathbb{R}_0^+\times\mathbb{R}_0^+\rightarrow\mathbb{R}_0^+$} such that for every $s\in\mathbb{R}^+$, the function $\beta(\cdot,s)$ belongs to class $\mathcal{K}_{\infty}$, and for any $x_{1},x_{2}\in X$, any $\tau\in\mathbb{R}^+$, and any $u\in\mathcal{U}$, the following condition is satisfied for all $t\in[0,\tau]$:
\[
\left\Vert \mathbf{x}(t,x_1,u)-\mathbf{x}(t,x_2,u)\right\Vert\leq\beta(\Vert{x_1}-x_2\Vert,t). %
\]
\end{definition}

Incremental forward completeness requires the distance between two arbitrary trajectories to be bounded by a continuous function capturing the mismatch between  initial conditions. The class of $\delta$-FC control systems is rather large and includes also some subclasses of unstable control systems; for instance unstable linear systems are $\delta$-FC.
The notion of $\delta$-FC can be described in terms of Lyapunov-like functions.
\begin{definition}
\label{delta_FC_Lyapunov}
A smooth function $V:X\times X\rightarrow\mathbb{R}$ is called a $\delta$--FC Lyapunov function for the control system (\ref{NCSeq1})  if there exist $\lambda\in\mathbb{R}$ and $\mathcal{K}_{\infty}$ functions $\underline{\alpha}$ and $\overline{\alpha}$ such that, for any $x_{1},x_{2}\in X$ and any $u\in U$, the following conditions hold true:
\begin{itemize}
\item[(i)] $\underline{\alpha}(\Vert{x_{1}-x_{2}}\Vert)\leq V(x_{1},x_{2})\leq\overline{\alpha}(\Vert{x_{1}-x_{2}}\Vert)$,
\item[(ii)] $\frac{\partial{V}}{\partial{x_{1}}} f(x_{1},u)+\frac{\partial{V}}{\partial{x_{2}}} f(x_{2},u) \leq \lambda V(x_{1},x_{2})$.
\end{itemize}
\end{definition}
The existence of a $\delta$-FC Lyapunov function was proven in \cite{MajidTAC11} to be a sufficient condition for $\delta$-FC of a control system.
In the following we suppose that the control system $P$ in the NCS $\Sigma$ enjoys the following properties:
\begin{itemize}
\item[(H1)] There exists a $\delta$--FC Lyapunov function $V$ satisfying the inequality (ii) in Definition \ref{delta_FC_Lyapunov} for some $\lambda\in\mathbb{R}$;
\item[(H2)] There exists a $\mathcal{K}_{\infty}$ function $\gamma$ such that 
$V(x,x^{\prime})-V(x,x^{\prime \prime})\leq\gamma(\Vert{x^{\prime}-x^{\prime \prime}}\Vert)$, for every $x,x^{\prime},x^{\prime\prime}\in X$.
\end{itemize}

Given a design parameter $\eta \in \mathbb{R}^{+}$, define the following system
\[
S_{\ast}(\Sigma)=(X_{\ast},X_{0,\ast},U_{\ast},\rTo_{\ast},Y_{\ast},H_{\ast})
\]
where:

\begin{itemize}
\item $X_{\ast}$ is the subset of $[X_0 \cup X_e]_{\mu_{x}}$ such that for any $x^{\ast}=(x^{\ast}_{1},x^{\ast}_{2},...,x^{\ast}_{N})\in X_{\ast}$ with $N\in [N_{\min};N_{\max}]$ the following condition holds:
\begin{align}
& V(\mathbf{x}(\tau,x^{\ast}_{i},u^{-}_{\ast}),x^{\ast}_{i+1}) \leq e^{\lambda \tau} \underline{\alpha}(\eta)+\gamma(\mu_{x}),  \quad i\in [1;N-2]; \label{eq:symb_states_1} \\
& V(\mathbf{x}(\tau,x^{\ast}_{N-1},u^{+}_{\ast}),x^{\ast}_{N}) \leq e^{\lambda \tau} \underline{\alpha}(\eta)+\gamma(\mu_{x}); \label{eq:symb_states_2}
\end{align}
for some constant functions $u^{-}_{\ast}$, $u^{+}_{\ast}\in [U]_{\mu_{u}}$;

\item $X_{0,\ast}=[X_0]_{\mu_{x}}$;

\item $U_{\ast}=[U]_{\mu_{u}}$;

\item $x^{1}\rTo^{u_{\ast}}_{\ast} x^{2}$, where
\[
\left\{
\begin{array}
{l}
V(\mathbf{x}(\tau,x_{i}^{1},u^{-}_{1}),x_{i+1}^{1}) \leq e^{\lambda \tau} \underline{\alpha}(\eta)+\gamma(\mu_{x}),
\quad \forall i\in [1;N_{1}-2];\\
V(\mathbf{x}(\tau,x^1_{N_{1}-1},u^{+}_{1}),x_{N_{1}}^{1}) \leq e^{\lambda \tau} \underline{\alpha}(\eta)+\gamma(\mu_{x}); \\
V(\mathbf{x}(\tau,x_{i}^{2},u^{-}_{2}),x_{i+1}^{2})   \leq e^{\lambda \tau} \underline{\alpha}(\eta)+\gamma(\mu_{x}),
\quad \forall i\in [1;N_{2}-2];\\
V(\mathbf{x}(\tau,x^2_{N_{2}-1},u^{+}_{2}),x_{N_{2}}^{2}) \leq e^{\lambda \tau} \underline{\alpha}(\eta)+\gamma(\mu_{x}); \\
u_2^{-} =  u_1^{+}; \\
u_2^{+} =  u_{\ast}; \\
V(\mathbf{x}(\tau,x^1_{N_{1}},u_1^{+}),x^{2}_{1})\leq e^{\lambda \tau} \underline{\alpha}(\eta)+\gamma(\mu_{x});
\end{array}
\right.
\]
for some $N_{1},N_{2}\in [N_{\min};N_{\max}]$;

\item $Y_{\ast}=X_{\tau}$;
\item $H_{\ast}(x^{\ast})=x^{\ast}$.
\end{itemize}

System $S_{\ast}(\Sigma)$ is metric when we regard the set of outputs $Y_{\ast}$ as being equipped with the metric $d_{Y_{\tau}}$.
We now have all the ingredients to present one of the main results of this paper.

\begin{theorem}
\label{thmain}
Consider the NCS $\Sigma$ and suppose that the control system $P$ enjoys properties (H1) and (H2). Then for any desired precision $\varepsilon\in\mathbb{R}^{+}$, any sampling time $\tau\in\mathbb{R}^{+}$, any state quantization $\mu_{x}\in\mathbb{R}^{+}$ and any choice of the design parameter $\eta\in\mathbb{R}^{+}$ satisfying the inequality
\begin{equation}
\mu_{x} \leq \min \{ \hat{\mu}_X,\overline{\alpha}^{-1}(\underline{\alpha}(\varepsilon)) \} \leq \eta,
\label{bisim_condition1}
\end{equation}
we have $S_{\ast}(\Sigma) \preceq_{\varepsilon}^{\alt} S(\Sigma) \preceq_{\varepsilon} S_{\ast}(\Sigma)$.
\label{polaut}
\end{theorem}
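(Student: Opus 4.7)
The plan is to exhibit a single $V$-based relation that witnesses both simulations. Define
\[
\mathcal{R}=\{(x,x^{\ast})\in X_{\tau}\times X_{\ast}:|x|=|x^{\ast}|=N,\ V(x_{i},x_{i}^{\ast})\le\underline{\alpha}(\varepsilon)\text{ for all }i\in[1;N]\},
\]
with the implicit requirement that the constant controls $u^{-},u^{+}$ attached to $x$ coincide with those of $x^{\ast}$ whenever $N>1$, so that transition histories remain compatible. I will show $\mathcal{R}$ is an $\varepsilon$-approximate simulation from $S(\Sigma)$ to $S_{\ast}(\Sigma)$ and $\mathcal{R}^{-1}$ is an $A\varepsilon A$ simulation from $S_{\ast}(\Sigma)$ to $S(\Sigma)$. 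Condition (ii) of Definition \ref{ASR} follows from Definition \ref{delta_FC_Lyapunov}(i): $\underline{\alpha}(\Vert x_{i}-x^{\ast}_{i}\Vert)\le V(x_{i},x^{\ast}_{i})\le\underline{\alpha}(\varepsilon)$ forces $\Vert x_{i}-x^{\ast}_{i}\Vert\le\varepsilon$ and hence $d_{Y_{\tau}}(x,x^{\ast})\le\varepsilon$. Condition (i) for the forward direction is dispatched by snapping, $x_{0}^{\ast}=[x_{0}]_{\mu_{x}}\in[X_{0}]_{\mu_{x}}$ with $V(x_{0},x_{0}^{\ast})\le\overline{\alpha}(\mu_{x}/2)\le\underline{\alpha}(\varepsilon)$ via (\ref{bisim_condition1}); for the reverse direction, since $[X_{0}]_{\mu_{x}}\subseteq X_{0}$, the choice $x_{0}=x_{0}^{\ast}$ makes $V$ vanish.

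The heart of the proof is the transition condition, which I handle by a common snap-to-grid construction. Given a transition $x^{1}\rTo^{u}_{\tau}x^{2}$ in $S(\Sigma)$, take $u_{\ast}=u$, $u^{-}_{2\ast}=u^{+}_{1\ast}$ (inherited from $x^{1\ast}$), and $x^{2\ast}_{i}=[x^{2}_{i}]_{\mu_{x}}$ for $i\in[1;N_{2}]$; for (iii$'$) I additionally pick $u=u^{\ast}$ before iterating over every admissible $x^{2}$. That $(x^{2},x^{2\ast})\in\mathcal{R}$ is immediate from $V(x^{2}_{i},x^{2\ast}_{i})\le\overline{\alpha}(\mu_{x}/2)\le\underline{\alpha}(\varepsilon)$, so the substantive work is to check $x^{2\ast}\in X_{\ast}$ and the legality of the transition $x^{1\ast}\rTo^{u_{\ast}}_{\ast}x^{2\ast}$. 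The template estimate, applying (H2) with first argument $\mathbf{x}(\tau,x^{2\ast}_{i},u^{-}_{2})$ and then integrating (H1) on $[0,\tau]$, reads
\begin{align*}
V(\mathbf{x}(\tau,x^{2\ast}_{i},u^{-}_{2}),x^{2\ast}_{i+1})
&\le V(\mathbf{x}(\tau,x^{2\ast}_{i},u^{-}_{2}),\mathbf{x}(\tau,x^{2}_{i},u^{-}_{2}))+\gamma(\Vert x^{2}_{i+1}-x^{2\ast}_{i+1}\Vert)\\
&\le e^{\lambda\tau}V(x^{2\ast}_{i},x^{2}_{i})+\gamma(\mu_{x})\le e^{\lambda\tau}\overline{\alpha}(\mu_{x})+\gamma(\mu_{x}),
\end{align*}
which is at most $e^{\lambda\tau}\underline{\alpha}(\eta)+\gamma(\mu_{x})$ by (\ref{bisim_condition1}). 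The same template, started from $V(x^{1\ast}_{N_{1}},x^{1}_{N_{1}})\le\underline{\alpha}(\varepsilon)$ supplied by $(x^{1},x^{1\ast})\in\mathcal{R}$, yields the connecting bound $V(\mathbf{x}(\tau,x^{1\ast}_{N_{1}},u^{+}_{1}),x^{2\ast}_{1})\le e^{\lambda\tau}\underline{\alpha}(\eta)+\gamma(\mu_{x})$; the internal bound along $x^{1\ast}$ holds by the inductive hypothesis $x^{1\ast}\in X_{\ast}$.

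The main obstacle I anticipate is not any single estimate but the simultaneous bookkeeping of the three families of constraints in the definition of $\rTo_{\ast}$: the internal $V$-bounds along $x^{1\ast}$ and $x^{2\ast}$, the connecting bound linking their last/first entries, and the matching condition $u^{+}_{1\ast}=u^{-}_{2\ast}$ on the piecewise-constant controls. The separation between the relation's gauge $\underline{\alpha}(\varepsilon)$ and the model's slack $\underline{\alpha}(\eta)$ is exactly what the chain $\mu_{x}\le\overline{\alpha}^{-1}(\underline{\alpha}(\varepsilon))\le\eta$ in (\ref{bisim_condition1}) is designed to reconcile, and tracking these constants carefully — especially in the unstable regime $\lambda>0$, where $e^{\lambda\tau}$ amplifies rather than contracts the discrepancy — will be the most delicate part of the argument.
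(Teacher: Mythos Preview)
Your proposal is correct and mirrors the paper's proof: the same $V$-based relation (the paper orients it as $\mathcal{R}\subseteq X_{\ast}\times X_{\tau}$ with $V(x_i^{\ast},x_i)\le\underline{\alpha}(\varepsilon)$, so your $\mathcal{R}$ is its inverse), the same snap-to-grid choice $\bar{x}^{\ast}_i=[\bar{x}_i]_{\mu_x}$ for the successor, and the same template estimate combining (H2) with the Gr\"onwall-type integration of (H1). The only cosmetic differences are that the paper bounds $\Vert\bar{x}_i-\bar{x}^{\ast}_i\Vert$ by $\mu_x$ rather than $\mu_x/2$, and that it phrases the final inequality as $\underline{\alpha}(\varepsilon)\le\underline{\alpha}(\eta)$ (invoking ``$\varepsilon\le\eta$'' from \eqref{bisim_condition1}) rather than your $\overline{\alpha}(\mu_x)\le\underline{\alpha}(\eta)$.
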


\begin{proof}
First we prove that $S_{\ast}(\Sigma) \preceq_{\varepsilon}^{\alt} S(\Sigma)$, according to Definition \ref{ASR}. Consider the relation $\mathcal{R}\subseteq X_{\ast}\times X_{\tau}$ defined by $(x^{\ast},x)\in\mathcal{R}$ if and only if:
\begin{itemize}
\item $x^{\ast}=(x^{\ast}_{1},x^{\ast}_{2},...,x^{\ast}_{N})$, $x=(x_{1},x_{2},...,x_{N})$, for some $N\in [N_{\min};N_{\max}]$;
\item $V(x_{i}^{\ast},x_{i})\leq \underline{\alpha}(\varepsilon)$ for $i\in [1;N]$;
\item Eqns. (\ref{eq:states_1}), (\ref{eq:states_2}), (\ref{eq:symb_states_1}), (\ref{eq:symb_states_2}) hold for some $u^{-}=u^{-}_{\ast}$ and $u^{+}=u^{+}_{\ast}$.
\end{itemize}

Conditions (i) and (ii) in Definition \ref{ASR} can be proven by using similar arguments employed in the proof of Theorem 5.8 in \cite{BorriHSCC12}.
We now show that condition (iii$'$) in Definition \ref{ASR} holds. Consider any $(x^{\ast},x)\in\mathcal{R}$, with $x^{\ast}=(x^{\ast}_{1},x^{\ast}_{2},...,x^{\ast}_{N})$, $x=(x_{1},x_{2},...,x_{N})$, for some $N\in [N_{\min};N_{\max}]$, and any $u_{\ast}\in U_{\ast}$; then pick $u=u_{\ast}\in U_{\tau}$ and consider any transition $x \rTo^u_{\tau} \bar{x}$, with $\bar{x}=(\bar{x}_1,\bar{x}_2,...,\bar{x}_{\bar{N}})$, for some $\bar{N}\in [N_{\min};N_{\max}]$. Pick $\bar{x}^{\ast}=(\bar{x}^{\ast}_1,\bar{x}^{\ast}_2,...,\bar{x}^{\ast}_{\bar{N}})$ defined by $\bar{x}^{\ast}_i=[\bar{x}_i]_{\mu_x}$ for all $i$.
We now prove that $x^{\ast} \rTo^{u_{\ast}} \bar{x}^{\ast}$ is a transition of $S_{\ast}(\Sigma)$. First, from condition (i) in Definition \ref{delta_FC_Lyapunov}, the definition of $\bar{x}$ and the first inequality in (\ref{bisim_condition1}), one can write:
\begin{equation}
V(\bar{x}^{\ast}_{i},\bar{x}_{i}) \leq \overline{\alpha}(\mu_x) \leq \overline{\alpha}(\overline{\alpha}^{-1}(\underline{\alpha}(\varepsilon))) = \underline{\alpha}(\varepsilon)
\label{eq:bound_epsilon}
\end{equation}
for all $i$. By Assumption (H1), condition (ii) in Definition \ref{delta_FC_Lyapunov} writes:
\begin{equation}
\frac{\partial{V}}{\partial{x_{N}^{\ast}}} f(x_{N}^{\ast},u_{\ast}^+) + \frac{\partial{V}}{\partial{x_{N}}} f(x_{N},u^+) \leq \lambda V(x_{N}^{\ast},x_{N}). \\
\label{eq:ineq}
\end{equation}

By considering Assumption (H2), the definitions of $\mathcal{R}$ and $ S(\Sigma)$, and by integrating the previous inequality, the following holds:
\begin{align}
\label{eq_chain_of_ineq}
V(\mathbf{x}(\tau,x^{\ast}_{N},u_{\ast}^+),\bar{x}^{\ast}_{1})   & \leq V(\mathbf{x}(\tau,x^{\ast}_{N},u_{\ast}^+),\bar{x}_1) +\gamma(\Vert{\bar{x}_1-\bar{x}^{\ast}_{1}}\Vert) \\
&  \leq  e^{\lambda \tau} V(x^{\ast}_N,x_N)+\gamma(\Vert{\bar{x}_1-\bar{x}^{\ast}_{1}}\Vert) \nonumber \\
& \leq  e^{\lambda \tau} \underline{\alpha}(\varepsilon)+\gamma(\mu_{x}) \leq  e^{\lambda \tau} \underline{\alpha}(\eta)+\gamma(\mu_{x}),\nonumber
\end{align}
where condition $\varepsilon \leq \eta$ in (\ref{bisim_condition1}) has been used in the last step. By similar computations, it is possible to prove that the inequality in (\ref{eq:bound_epsilon}) implies:
\begin{align}
& V(\mathbf{x}(\tau,\bar{x}^{\ast}_{i},u_{\ast}^+),\bar{x}^{\ast}_{i+1}) \leq e^{\lambda \tau} \underline{\alpha}(\eta)+\gamma(\mu_{x}), \text{ } i\in [1;\bar{N}-2];\label{eq:recursive}\\
& V(\mathbf{x}(\tau,\bar{x}^{\ast}_{N-1},u_{\ast}),\bar{x}^{\ast}_{N})
\leq  e^{\lambda \tau} \underline{\alpha}(\eta)+\gamma(\mu_{x}). \label{eq:recursive_2}
\end{align}

Hence, from the inequalities in (\ref{eq_chain_of_ineq})--(\ref{eq:recursive_2}) and from the definition of the transition relation in $S_{\ast}(\Sigma)$, the transition $x^{\ast} \rTo^{u_{\ast}} \bar{x}^{\ast}$ is in $S_{\ast}(\Sigma)$, implying with (\ref{eq:bound_epsilon}) that $(\bar{x}^{\ast},\bar{x})\in\mathcal{R}$, which concludes the proof of condition (iii$'$) of Definition \ref{ASR}.
We now prove $S(\Sigma) \preceq_{\varepsilon} S_{\ast}(\Sigma)$, according to Definition \ref{ASR}, by considering the relation  $\mathcal{R}^{-1}$.
We prove condition (iii) in Definition \ref{ASR}, because the proof of condition (i) is given in \cite{BorriHSCC12}, while condition (ii) is fulfilled for the relation $\mathcal{R}^{-1}$ because it has been proved to hold for $\mathcal{R}$.
Consider any $(x,x^{\ast})\in\mathcal{R}^{-1}$, with $x=(x_{1},x_{2},...,x_{N})$, $x^{\ast}=(x^{\ast}_{1},x^{\ast}_{2},...,x^{\ast}_{N})$, for some $N\in [N_{\min};N_{\max}]$, and any transition $x \rTo^{u} \bar{x}$ in $S(\Sigma)$, for some $u\in U_{\tau}$, with $\bar{x}=(\bar{x}_1,\bar{x}_2,...,\bar{x}_{\bar{N}})$ for some $\bar{N}\in [N_{\min};N_{\max}]$. Pick $\bar{x}^{\ast}=(\bar{x}^{\ast}_1,\bar{x}^{\ast}_2,...,\bar{x}^{\ast}_{\bar{N}})$ defined by $\bar{x}^{\ast}_i=[\bar{x}_i]_{\mu_x}$ for all $i$. By using similar arguments as in the proof of condition (iii$'$) of Definition \ref{ASR} for the relation $\mathcal{R}$, it is possible to show that the transition $x^{\ast} \rTo^{u_{\ast}} \bar{x}^{\ast}$, with $u_{\ast}=u$, is in $S_{\ast}(\Sigma)$, and that $V(\bar{x}_{i},\bar{x}^{\ast}_{i}) \leq \underline{\alpha}(\varepsilon)$ for all $i$, hence $(\bar{x},\bar{x}^{\ast})\in\mathcal{R}^{-1}$, which concludes the proof.
\end{proof}

This result is important because it provides symbolic models for \textit{possibly unstable} nonlinear NCS, with guaranteed approximation bounds. This result generalizes the ones in \cite{BorriHSCC12}, which instead require incrementally stable NCS.

\section{Robust symbolic Control Design}\label{sec:control}

We consider a control design problem where the NCS $\Sigma$ has to satisfy a given specification robustly with respect to the non-idealities of the communication network. Our specification is a collection of transitions $\rTo_{\bar{q}}\subseteq \bar{X}_{q}\times \bar{X}_{q}$, where $\bar{X}_{q}$ is a finite subset of $\mathbb{R}^{n}$. Given a set of initial states $\bar{X}_{q}^{0}\subseteq \bar{X}_{q}$, we now reformulate the specification in the form of the system
\[
\mathcal{Q}=(X_q,X^{0}_q,U_q,\rTo_{q},Y_q,H_q),
\]

where:

\begin{itemize}
\item $X_q$ is the subset of $ \bar{X}_{q}^0 \cup \left( \bigcup_{N\in [N_{\min};N_{\max}]} \bar{X}_q^{N} \right)$ such that for any $x=(x_{1},x_{2},...,x_{N})\in X_q$, with $N\in [N_{\min};N_{\max}]$, for any $i\in [1;N-1]$, the transition $x_{i} \rTo_q x_{i+1}$ is in $\rTo_{\bar{q}}$;

\item $X^{0}_q=\bar{X}_{q}^0$;

\item $U_q=\{ \bar{u}_q \}$, where $\bar{u}_q$ is a \emph{dummy} symbol;

\item $x^{1}\rTo_{q}^{\bar{u}_q} x^{2}$, where
$x^{1} =(x^{1}_{1},x^{1}_{2},...,x^{1}_{N_{1}})$, $x^{2} =(x^{2}_{1},x^{2}_{2},...,x^{2}_{N_{1}})$, $N_1,N_2 \in [N_{\min};N_{\max}]$ and the transition $x^1_{N_1} \rTo_q x^2_{1}$ is in $\rTo_{\bar{q}}$;

\item $Y_q=X_q$;

\item $H_q=1_{X_q}$,

\end{itemize}

where $N_{\min}$ and $N_{\max}$ are as in (\ref{eq:N_min-max}).
We are now ready to state the control problem that we address in this section.

\begin{problem}
\label{problem}
Consider the NCS $\Sigma$, a specification $\mathcal{Q}$ and a desired precision $\varepsilon\in\mathbb{R}^{+}$. Find a symbolic controller $C$, a parameter $\theta\in\mathbb{R}^{+}$ and a $A\theta A$ simulation relation $\mathcal{R}$ from $C$ to $S(\Sigma)$ such that:
\begin{itemize}
\item [(1)] $\varnothing \neq S(\Sigma)\times_{\theta}^\mathcal{R} C  \preceq_{\varepsilon} \mathcal{Q}$;
\item [(2)] $S(\Sigma)\times_{\theta}^\mathcal{R} C$ is non-blocking.
\end{itemize}
\end{problem}
Note that the approximate similarity inclusion in (1) requires the state trajectories of the NCS to be close to the ones of specification $\mathcal{Q}$ up to the accuracy $\varepsilon$ robustly with respect to the non-determinism imposed by the network. The non-blocking condition (2) prevents deadlocks in the interaction between the plant and the controller.
In the following definition, we provide the controller $C^{\ast}$ that is shown in the sequel to solve Problem \ref{problem}.

\begin{definition}
Let $C^{\ast}$ be the maximal non-blocking sub-system\footnote{Here maximality is defined with respect to the preorder induced by the notion of $A 0 A$ simulation.} $C$ of $S_{\ast}(\Sigma)$ such that $C \preceq_{\mu_x} \mathcal{Q}$ and $C \preceq^{\alt}_{0}  S_{\ast}(\Sigma)$.
\label{canon_contr}
\end{definition}

From the above definition it is easy to see that $C^{\ast}$ is symbolic.
The following technical result will be useful in the sequel.

\begin{lemma}
Let \mbox{$S_{i}=(X_{i},X_{0,i},U_{i},\rTo_{i},Y_{i},H_{i})$} ($i=1$, $2$, $3$) be metric systems with the same output sets $Y_{1}=Y_{2}=Y_{3}$ and metric $d$. Then the following statements hold:
\begin{itemize}
\item[(i)] for any $\varepsilon_{1}\leq \varepsilon_{2}$, $S_{1}\preceq^{\alt}_{\varepsilon_{1}}S_{2}$ implies \mbox{$S_{1}\preceq^{\alt}_{\varepsilon_{2}}S_{2}$};
\item[(ii)] if $S_{1}\preceq^{\alt}_{\varepsilon_{12}}S_{2}$ and $S_{2}\preceq^{\alt}_{\varepsilon_{23}}S_{3}$ then $S_{1}\preceq^{\alt}_{\varepsilon_{12}+\varepsilon_{23}}S_{3}$;
\item[(iii)] for any $\theta\in\mathbb{R}^{+}_{0}$ and any $A\theta A$ simulation relation $\mathcal{R}$ from $S_2$ to $S_1$, \\ \mbox{$S_{1} \times^{\mathcal{R}}_{\theta} S_{2} \preceq_{\theta} S_{2}$}.
\end{itemize}
\label{lemma1}
\end{lemma}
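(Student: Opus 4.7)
For part (i), the plan is essentially vacuous: I would take the very same $A\varepsilon_1 A$ simulation relation $\mathcal{R}$ witnessing $S_1\preceq^{\alt}_{\varepsilon_1}S_2$ and observe that conditions (i) and (iii$'$) of Definition \ref{ASR} do not involve the precision, while condition (ii) only gets easier to satisfy when the precision is enlarged, since $d(H_1(x_1),H_2(x_2))\leq\varepsilon_1\leq\varepsilon_2$.

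For part (ii), the natural candidate relation is the set-theoretic composition
\[
\mathcal{R}_{13}=\mathcal{R}_{23}\circ\mathcal{R}_{12}=\{(x_1,x_3)\in X_1\times X_3\,:\,\exists x_2\in X_2,\ (x_1,x_2)\in\mathcal{R}_{12},\ (x_2,x_3)\in\mathcal{R}_{23}\}.
\]
Condition (i) is handled by chaining the initial-state conditions of $\mathcal{R}_{12}$ and $\mathcal{R}_{23}$; condition (ii) follows from the triangle inequality for $d$, which gives the additive bound $\varepsilon_{12}+\varepsilon_{23}$. The only interesting step is condition (iii$'$): given $(x_1,x_3)\in\mathcal{R}_{13}$ with witness $x_2$, and given $u_1\in U_1$, I first apply (iii$'$) of $\mathcal{R}_{12}$ to obtain a $u_2\in U_2$, then apply (iii$'$) of $\mathcal{R}_{23}$ at the pair $(x_2,x_3)$ with input $u_2$ to obtain a $u_3\in U_3$. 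This $u_3$ is the witness I choose. Then for any transition $x_3\rTo_3^{u_3}x_3'$, the relation $\mathcal{R}_{23}$ provides $x_2\rTo_2^{u_2}x_2'$ with $(x_2',x_3')\in\mathcal{R}_{23}$, and the relation $\mathcal{R}_{12}$ in turn provides $x_1\rTo_1^{u_1}x_1'$ with $(x_1',x_2')\in\mathcal{R}_{12}$, so that $(x_1',x_3')\in\mathcal{R}_{13}$. The mild subtlety here is that the quantifier ordering $\forall u_1\exists u_2\forall x_2' \exists x_1'$ matches up correctly between the two applications, because $u_2$ is chosen before we see any successor, which is exactly the alternating pattern.

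For part (iii), write $S=S_1\times^{\mathcal{R}}_{\theta}S_2$ and propose the projection relation
\[
\mathcal{R}'=\bigl\{\bigl((x_1,x_2),x_2\bigr)\,:\,(x_1,x_2)\in X=\mathcal{R}^{-1}\bigr\}\subseteq X\times X_2.
\]
The initial-state condition is immediate from $X_0\subseteq X_{0,1}\times X_{0,2}$. The output-closeness condition $d(H(x_1,x_2),H_2(x_2))=d(H_1(x_1),H_2(x_2))\leq\theta$ is supplied directly by condition (ii) of the $A\theta A$ simulation $\mathcal{R}$ (recall $(x_2,x_1)\in\mathcal{R}$). For the (ordinary, non-alternating) transition condition (iii), I take any transition $(x_1,x_2)\rTo^{u_1}(x_1',x_2')$ in $S$; by Definition \ref{composition} this provides some $u_2\in U_2$ with $x_2\rTo_2^{u_2}x_2'$, which is the required matching transition, and clearly $((x_1',x_2'),x_2')\in\mathcal{R}'$ since $(x_1',x_2')\in X$ by construction of $S$.

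I do not anticipate any serious obstacle; the only point that deserves a bit of care is ensuring the quantifier ordering in part (ii), since mishandling it would accidentally prove the weaker ordinary simulation rather than the alternating one. Parts (i) and (iii) are essentially bookkeeping.
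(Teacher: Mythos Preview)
Your argument is correct in all three parts. The paper itself states Lemma~\ref{lemma1} without proof, treating it as a routine technical fact, so there is no ``paper's own proof'' to compare against; your sketch is exactly the standard verification one would supply, and the care you take with the quantifier alternation in part~(ii) is the only point that is not entirely mechanical.
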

We are now ready to solve Problem \ref{problem}.

\begin{theorem}
Consider the NCS $\Sigma$ and the specification $\mathcal{Q}$. Suppose that the control system $P$ in $\Sigma$ enjoys Assumptions (H1) and (H2). Then for any desired precision $\varepsilon\in\mathbb{R}^{+}$, choose the parameters $\theta,\mu_{x},\eta\in\mathbb{R}^{+}$ such that:
\begin{align}
& \mu_{x}+\theta  \leq\varepsilon,\label{solving_condition_1}\\
& \mu_{x} \leq \min \{ \hat{\mu}_X,\overline{\alpha}^{-1}(\underline{\alpha}(\theta)) \} \leq \theta \leq \eta\text{.}\label{solving_condition_2}
\end{align}
Let $\bar{\mathcal{R}}$ be the maximal $A \theta A$ simulation relation\footnote{The maximal $A \theta A$ simulation relation is the unique $A \theta A$ simulation relation that contains all the $A \theta A$ simulation relations.} from $C^{\ast}$ to $S(\Sigma)$. If $\bar{\mathcal{R}}\neq \varnothing$, Problem \ref{problem} is solved with $C=C^{\ast}$ and $\mathcal{R}=\bar{\mathcal{R}}$.
\label{Mmain}
\end{theorem}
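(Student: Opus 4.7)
The plan is to chain the (alternating) approximate simulations linking $C^*$, $S_*(\Sigma)$ and $S(\Sigma)$, and then to read off the two items of Problem \ref{problem} using Lemma \ref{lemma1} together with the defining properties of $C^*$. First I would observe that (\ref{solving_condition_2}) is exactly the hypothesis of Theorem \ref{thmain} with $\theta$ in place of $\varepsilon$, so $S_*(\Sigma)\preceq_\theta^{\alt} S(\Sigma)$ and $S(\Sigma)\preceq_\theta S_*(\Sigma)$ both hold. Combining $C^*\preceq_0^{\alt} S_*(\Sigma)$ from Definition \ref{canon_contr} with the first of these via Lemma \ref{lemma1}(ii) yields $C^*\preceq_\theta^{\alt} S(\Sigma)$; this chain relation is contained in the maximal $A\theta A$ simulation $\bar{\mathcal{R}}$, so $\bar{\mathcal{R}}\neq\varnothing$ is equivalent to $C^*$ being non-empty and the assumption is exactly the ``solvability'' condition.

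For condition (1), the heart of the argument is Lemma \ref{lemma1}(iii) applied to the composition, which gives $S(\Sigma)\times_\theta^{\bar{\mathcal{R}}} C^*\preceq_\theta C^*$. Chaining with $C^*\preceq_{\mu_x}\mathcal{Q}$, which is built into Definition \ref{canon_contr}, via the standard (unilateral) transitivity of approximate simulation yields $S(\Sigma)\times_\theta^{\bar{\mathcal{R}}} C^*\preceq_{\theta+\mu_x}\mathcal{Q}$; inequality (\ref{solving_condition_1}) together with Lemma \ref{lemma1}(i) (or its regular-simulation analog) upgrades the precision to $\varepsilon$. Non-emptiness of the composition follows from $\bar{\mathcal{R}}\neq\varnothing$ and Definition \ref{ASR}(i), which matches each initial state of $C^*$ (inherited as a sub-system of $S_*(\Sigma)$) with an initial state of $S(\Sigma)$, producing an initial pair in $\bar{\mathcal{R}}^{-1}\cap(X_{0,S(\Sigma)}\times X_{0,C^*})$.

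For condition (2), I would fix an arbitrary $(s,c)\in\bar{\mathcal{R}}^{-1}$ and construct a transition from it. Non-blocking of $C^*$ supplies $u_c\in U_{C^*}$ and $c'\in X_{C^*}$ with $c\rTo^{u_c} c'$; condition (iii$'$) of Definition \ref{ASR} applied to $\bar{\mathcal{R}}$ then provides $u_s\in U_{S(\Sigma)}$ such that every $u_s$-successor of $s$ is matched by some $u_c$-successor of $c$ through $\bar{\mathcal{R}}$. A direct inspection of the transition relation of $S(\Sigma)$ shows, because the length $N_2\in[N_{\min};N_{\max}]$ is a free parameter and the trajectories involved exist thanks to forward completeness of $P$, that $s$ admits at least one $u_s$-successor $s'$; the matching $c'$ then delivers the composition transition $(s,c)\rTo^{u_s}(s',c')$.

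The main obstacle I anticipate is the careful bookkeeping of the two flavours of simulation: the alternating chain is essential both to construct $\bar{\mathcal{R}}$ and to discharge item (2), while the plain chain is what delivers the $\varepsilon$-inclusion in (1), and the precision budget $\theta+\mu_x\leq\varepsilon$ has to be distributed correctly between these two steps. A secondary but genuine technical point is justifying that every state of $S(\Sigma)$ admits a successor under every input; this does not follow from the abstract non-blocking property but must be extracted from the explicit construction of $S(\Sigma)$ and the forward completeness of $P$.
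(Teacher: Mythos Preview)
Your proposal is correct and follows essentially the same route as the paper: chain $C^{\ast}\preceq_0^{\alt}S_{\ast}(\Sigma)\preceq_\theta^{\alt}S(\Sigma)$ via Lemma~\ref{lemma1}(ii), apply Lemma~\ref{lemma1}(iii) to the composition, then transitivity with $C^{\ast}\preceq_{\mu_x}\mathcal{Q}$ and the budget $\mu_x+\theta\leq\varepsilon$ for condition~(1), and use non-blocking of $C^{\ast}$ together with (iii$'$) for condition~(2). You are in fact slightly more careful than the paper in two places it leaves implicit: the non-emptiness of the composed initial set, and the existence of at least one $u_s$-successor of $s$ in $S(\Sigma)$ (which, as you note, comes from forward completeness of $P$ and the free choice of $N_2$ rather than from the abstract definition of $A\theta A$ simulation).
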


\begin{proof}
First we prove condition (1) of Problem \ref{problem}. From Definition \ref{canon_contr}, $C^{\ast} \preceq^{alt}_{0} S_{\ast}(\Sigma)$. Furthermore, condition (\ref{solving_condition_2}) implies that $S_{\ast}(\Sigma) \preceq_{\theta}^{alt} S(\Sigma)$ from Theorem \ref{polaut}. Hence from Lemma \ref{lemma1} (ii), $C^{\ast}\preceq_{\theta}^{alt} S(\Sigma)$. Let $\bar{\mathcal{R}}$ be the maximal $A \theta A$ simulation relation from $C^{\ast}$ to $S(\Sigma)$. From Lemma \ref{lemma1} (iii), $S(\Sigma)\times^{\bar{\mathcal{R}}}_{\theta} C^{\ast} \preceq_{\theta} C^{\ast}$. Since $C^{\ast} \preceq_{\mu_x} Q$ from Definition \ref{canon_contr}, by Proposition 2 in \cite{AB-TAC07} the above approximate similarity inclusions imply $S(\Sigma)\times^{\bar{\mathcal{R}}}_{\theta} C^{\ast} \preceq_{\varepsilon} Q$, which concludes the proof of condition (1) of Problem \ref{problem}.

We now show that condition (2) holds. Consider any state $(x,x_c)$ of $S(\Sigma)\times^{\bar{\mathcal{R}}}_{\theta}  C^{\ast}$. Pick any $u_{c}\in U_{c}(x_c)\neq \varnothing$ because $C^{\ast}$ is non-blocking. Since $(x_c,x)$ belongs to the maximal $A \theta A$ simulation $\bar{\mathcal{R}}$ relation from $C^{\ast}$ to $S(\Sigma)$, there exists $u\in U_{\tau}(x)$ s.t. for any $x\rTo_{\tau}^{u} x^{\prime}$ in $S(\Sigma)$ there exists $x_c\rTo_{c}^{u_c}x^{\prime}_{c}$ in $C^{\ast}$ with $(x^{\prime}_{c},x^{\prime})\in\bar{\mathcal{R}}$. Hence, from Definition \ref{composition}, the transition $(x,x_{c}) \rTo^u (x^{\prime},x^{\prime}_{c})$ is in $S(\Sigma)\times^{\bar{\mathcal{R}}}_{\theta} C^{\ast}$, implying that $S(\Sigma)\times^{\bar{\mathcal{R}}}_{\theta} C^{\ast}$ is non-blocking, which concludes the proof.
\end{proof}

\section{Integrated Design of Symbolic Controllers}\label{sec:integrated}

The construction of the symbolic controller $C^{\ast}$ relies upon the procedure illustrated in Algorithm \ref{alg3}.

\incmargin{1em}
\restylealgo{boxed}\linesnumbered
\begin{algorithm}[H]
\label{alg3}
\SetLine
\caption{Construction of the controller $C^{\ast}$.}
Compute the system $S_{\ast}(\Sigma)$\;
Compute the system $\mathcal{Q}$ from the transition relation $\rTo_{\bar{q}}$\;
Compute the controller $C^{\ast}$.
\end{algorithm}
\decmargin{1em}

This procedure is not efficient from the computational complexity point of view, because:
\begin{itemize}
\item[(i)] It requires the preliminary construction of the symbolic system $S_{\ast}(\Sigma)$, representing the NCS, and of the system $\mathcal{Q}$, representing the specification.
\item[(ii)] It considers the whole state space of the plant $P$, while a more efficient algorithm would consider only the accessible part\footnote{The \textit{accessible part} of a system $S$ is the unique accessible system $Ac(S)$ such that $S' \sqsubseteq Ac(S) \sqsubseteq S$, for any accessible system $S' \sqsubseteq S$. } of $P$. 
\end{itemize}

In order to cope with the drawbacks listed above, inspired by the integrated procedure developed in \cite{PolaTAC12} for the simpler case of symbolic control design of nonlinear systems, we now present a procedure that \textit{integrates each step of Algorithm \ref{alg3} in one algorithm}.
The pseudo-code of the proposed procedure is reported in Algorithm \ref{alg} and Algorithm \ref{alg2}. Algorithm \ref{alg} is the main one while Algorithm \ref{alg2} introduces function \textbf{BuildTree} that is used in Algorithm \ref{alg}. The outcome of Algorithm \ref{alg} is the symbolic controller $C^{\ast\ast}$. In the sequel, line $i$ of Algorithm $j$ will be recalled as line $j.i$.
Algorithm \ref{alg} proceeds as follows. In line 2.2  the set $\mathbf{X}_{target}$ of to-be-processed states is initialized and the set $Bad$ of blocking states is empty.
At each basic step, Algorithm \ref{alg} processes a (non--processed) state $x$ in line 2.4. The test in line 2.6 verifies the existence of a control input $u$ such that all the states (collected in the vector $\mathbf{x}(N_{\min}\tau{:}N_{\max}\tau,x,u)$) that are reachable from $x$ in the plant in time intervals from $N_{\min} \tau$ to $N_{\max} \tau$ are also reachable (up to the accuracy $\theta$) in the specification through a path of length between $N_{\min}$ and $N_{\max}$. If that happens, the control input $u$ is good for state $x$ (it is added to the controller in line 2.7) and function \textbf{BuildTree} is called (line 2.14) from all the states reached in the plant that are not equal to the state $x$ that is being processed (lines 2.11--2.12). If there exists a controller fulfilling the specification for all those states, the boolean variable $Found$ is set to $true$ and a solution is found (lines 2.24--2.25), otherwise it is guaranteed that $C^{\ast}$ defined in Definition \ref{canon_contr} is empty.
Algorithm \ref{alg2} (function \textbf{BuildTree}) checks the existence of a control input starting from the current state such that the specification is fulfilled robustly, up to the precision $\theta$. If that happens, the control input is added to the controller (line 3.5) and function \textbf{BuildTree} itself is called (line 3.13) recursively from all the states reached in the plant that have not been processed yet (lines 3.8--3.11). If there exists a controller fulfilling the specification for all those states, the function returns true (line 3.16), otherwise (line 3.19) it returns false and the current state is added to the set of bad states (line 3.20).
Termination, correctness and complexity of the integrated procedure are discussed in the remainder of this section. 

\incmargin{1em}
\restylealgo{boxed}\linesnumbered
\begin{algorithm}
\label{alg}
\SetLine
\caption{Integrated Symbolic Control Design.}
\textbf{Input:}
NCS $\Sigma$,
specification $\mathcal{Q}$,
precision $\varepsilon\in\mathbb{R}^{+}$,
quantization parameters $\theta,\mu_x,\eta \in \mathbb{R}^{+}$ satisfying the inequalities in (\ref{solving_condition_1}--\ref{solving_condition_2})\;
\textbf{Init:} $\mathbf{X}_{target}=\{x_p\in X_{0,\ast} : \exists x_{q}\in X^{0}_{q} :
\Vert x_{p} - x_{q} \Vert \leq \theta \}
$, $\mathbf{global}$ $Bad=\varnothing$, $\mathbf{global}$ $C^{\ast\ast}$, $found=false$\;
\While{$\mathbf{X}_{target}\neq \varnothing \wedge found==false$}
{
\textbf{choose} $x \in \mathbf{X}_{target}$\;
$\mathbf{U}_{target}=U_{\ast}$\;
\While{$\mathbf{U}_{target}\neq \varnothing \wedge found==false$}
{
\textbf{choose} $u \in \mathbf{U}_{target}$\;
 $C^{\ast\ast}=\varnothing$\;

\If{$\mathbf{x}(N_{\min}\tau{:}N_{\max}\tau,x,u) \text{ meets } \mathcal{Q} \text{
 up to } \theta$}
{
\For{$N=N_{\min}:N_{\max}$}
{
                			 \uIf{$\exists x_{c}\in \mathbf{Domain}(C^{\ast\ast}) : \Vert \mathbf{x}(N\tau,x,u)] - x_{c} \Vert \leq \theta$}
                			 {
                			 $Flag_N=true$;
                			 }
                			 \Else
                			 {
                			 ${Flag_N=\textbf{BuildTree}([\mathbf{x}(N\tau,x,u)]_{\mu_x})}$;
                       }
}

}
$found=\bigwedge_{N=N_{\min}}^{N_{\max}} Flag_N$\;
$\mathbf{U}_{target}=\mathbf{U}_{target}\setminus \{u\}$;
}
$\mathbf{X}_{target}=\mathbf{X}_{target}\setminus \{x\}$;
}

\uIf {$found==true$}{
$C^{\ast\ast}(x)=u$\;
\textbf{Controller found successfully!}
}
\Else {
$C^{\ast\ast}=\varnothing$\;
}
\textbf{output:} $C^{\ast\ast}$.
\end{algorithm}
\decmargin{1em}

\incmargin{1em}
\restylealgo{boxed}\linesnumbered
\begin{algorithm}
\label{alg2}
\SetLine
\caption{Recursive computation of subcontrollers.}
\textbf{Function }$flag=$\textbf{BuildTree}$(x)$\;
\textbf{Init:} $flag=false$, $\mathbf{U}_{target}=U_{\ast}$\;

\While{$\mathbf{U}_{target}\neq \varnothing \wedge flag==false$}
{
\textbf{choose} $u \in \mathbf{U}_{target}$\;

     			 $C^{\ast\ast}(x)=u$\;
     			 \If{$\mathbf{x}(N_{\min}\tau{:}N_{\max}\tau,x,u) \text{ meets } \mathcal{Q} \text{
 up } \theta$}
           {
     			 \For{$N=N_{\min}:N_{\max}$}
                			 {
                			 \uIf{$\exists x_{c}\in \mathbf{Domain}(C^{\ast\ast}) : \Vert \mathbf{x}(N\tau,x,u)] - x_{c} \Vert \leq \theta$}
                			 {
                			 $flag_N=true$;
                			 }
                			 \uElseIf{$[\mathbf{x}(N\tau,x,u)]_{\mu_x} \in Bad$}
                			 {
                			 $flag_N=false$;
                			 }
                			 \Else
                			 {
                			 ${flag_N=\textbf{BuildTree}([\mathbf{x}(N\tau,x,u)]_{\mu_x})}$;
                       }}

           {

           }$flag=\bigwedge_{N=N_{\min}}^{N_{\max}} flag_N$;
                			 }
     			      			}

           \If{$flag==false$}
     			      			{
     			      			$Bad=Bad\cup \{x\}$;
     			      			}
\end{algorithm}
\decmargin{1em}

\begin{theorem}
Algorithm \ref{alg} terminates in a finite number of steps.
\label{theo:termination}
\end{theorem}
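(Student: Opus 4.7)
The plan is to argue termination of each loop separately, then of the recursive function \textbf{BuildTree}, and combine these into the overall termination claim.

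First I would observe that the quantization parameters $\mu_x$ and $\mu_u$ are strictly positive and the sets $X$ and $U$ are bounded by the assumptions of Section~\ref{sec:modelingNCS}; consequently the quantized sets $X_{\ast}\subseteq[X_0\cup X_e]_{\mu_x}$, $U_{\ast}=[U]_{\mu_u}$, and the initial target set $\mathbf{X}_{target}\subseteq[X_0]_{\mu_x}$ are all finite. This yields termination of the two nested \textbf{while} loops of Algorithm~\ref{alg} (lines 2.3 and 2.6), since each iteration removes one element from the corresponding target set, and of the \textbf{for} loop of line 2.10, which executes exactly $N_{\max}-N_{\min}+1$ times.

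The main obstacle is proving termination of the recursive function \textbf{BuildTree}. My approach is to introduce the potential function $\Phi:=|X_{\ast}\setminus(\mathbf{Domain}(C^{\ast\ast})\cup Bad)|$ and argue that $\Phi$ strictly decreases across recursive descents. Indeed, on every call to \textbf{BuildTree}$(y)$ the assignment $C^{\ast\ast}(y)=u$ on line 3.5 is executed at least on the first iteration of its inner \textbf{while} loop (which is entered because $U_{\ast}$ is nonempty and $flag$ starts at $false$), placing $y$ in $\mathbf{Domain}(C^{\ast\ast})$ permanently; no instruction in either algorithm ever removes an element from $\mathbf{Domain}(C^{\ast\ast})$ during the execution of a single top-level \textbf{BuildTree} invocation, and $Bad$ only grows (line 3.20). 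On the other hand, a recursive call at line 3.13 is issued only when the tests on lines 3.8 and 3.11 both fail, so its argument lies strictly outside $\mathbf{Domain}(C^{\ast\ast})\cup Bad$ at the moment of the call. Hence each recursive descent strictly decreases $\Phi$, which is a nonnegative integer bounded above by $|X_{\ast}|$.

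By well-founded induction on $\Phi$, every invocation of \textbf{BuildTree} terminates and spawns at most $|X_{\ast}|$ recursive descendants; each descendant performs only $O(|U_{\ast}|(N_{\max}-N_{\min}+1))$ non-recursive work in its internal loops. Combining this bound with the finite number of iterations of the outer loops of Algorithm~\ref{alg}, the total number of elementary operations executed is finite, which yields the theorem.
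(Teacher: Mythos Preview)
Your argument is correct and rests on the same mechanism the paper uses: once \textbf{BuildTree} is entered at a state $y$, line 3.5 places $y$ in $\mathbf{Domain}(C^{\ast\ast})$, and together with $\mu_x\leq\theta$ this makes the test at line 3.8 succeed for any later attempt to recurse into $y$; symmetrically, states added to $Bad$ at line 3.20 are blocked by line 3.10. The paper states this directly as ``\textbf{BuildTree} cannot be executed recursively on the same state,'' while you package the same observation as strict decrease of the potential $\Phi=|X_{\ast}\setminus(\mathbf{Domain}(C^{\ast\ast})\cup Bad)|$ and invoke well-founded induction---a more explicit bookkeeping of the same idea, not a different route.
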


\begin{proof}
Algorithm \ref{alg} terminates when there are no more states $x$ in $\mathbf{X}_{target}$ to be processed. Line 2.21 ensures that the iteration in line 2.3 is run at most once for any state $x$ in $\mathbf{X}_{target}$. Furthermore, the function \textbf{BuildTree} cannot be executed recursively on the same state (that would block the procedure). In fact, if condition in line 3.3 is satisfied, the execution of line 3.5 implies that state $x$ will enjoy the condition in line 3.8, hence preventing the recursive execution of line 3.13. Similarly, if a state $x$ becomes bad (line 3.20), it will satisfy condition in line 3.10 in successive iterations, hence preventing the recursive execution of line 3.13.
\end{proof}

We now show that the controller $C^{\ast\ast}$, synthesized in Algorithm \ref{alg}, solves Problem \ref{problem}.

\begin{theorem}
\label{theo:correctness}
Let $S_{cl}(\Sigma)$ be the maximal sub-system of $S(\Sigma)$ including all the transitions $x^{1}\rTo^{u}_{\tau} x^{2}$ in $S(\Sigma)$, with $x^{i}=(x_{1}^{i},x_{2}^{i},...,x_{N_{i}}^{i})$, $i=1,2$, such that $u=C^{\ast\ast}(x^1_{N_1})$. Then $S_{cl}(\Sigma) \preceq_{\varepsilon} \mathcal{Q}$ and  $S_{cl}(\Sigma)$ is non--blocking.
\end{theorem}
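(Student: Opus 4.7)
The plan is to reduce the correctness of Algorithm 2 to the already-established Theorem \ref{Mmain} by showing that the synthesized $C^{\ast\ast}$ inherits the defining properties of $C^{\ast}$ in Definition \ref{canon_contr}. Concretely, I would establish three facts about the output of Algorithm \ref{alg}: (a) $C^{\ast\ast}\sqsubseteq S_{\ast}(\Sigma)$; (b) $C^{\ast\ast}\preceq_{\mu_x}\mathcal{Q}$; and (c) $C^{\ast\ast}$ is non--blocking. By the maximality in Definition \ref{canon_contr} this gives $C^{\ast\ast}\sqsubseteq C^{\ast}$, and the argument in the proof of Theorem \ref{Mmain} transfers verbatim with $C^{\ast\ast}$ in place of $C^{\ast}$ after one checks $C^{\ast\ast}\preceq^{\alt}_{0} S_{\ast}(\Sigma)$ (which is immediate from (a), since a sub-system is $A0A$--simulated by its parent via the identity relation).

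For (a), I would observe that every state added to the domain of $C^{\ast\ast}$ is of the form $[\mathbf{x}(N\tau,x,u)]_{\mu_x}$, hence belongs to $[X_{0}\cup X_{e}]_{\mu_{x}}$, and every control assigned is drawn from $U_{\ast}=[U]_{\mu_{u}}$; the transitions so induced are exactly those encoded by the defining inequalities of $S_{\ast}(\Sigma)$ since $[\mathbf{x}(N\tau,x,u)]_{\mu_x}$ is within $\mu_{x}$ of the true trajectory and the chain of Lyapunov estimates (\ref{eq:recursive})--(\ref{eq:recursive_2}) used in Theorem \ref{polaut} applies. For (b), the check \emph{``$\mathbf{x}(N_{\min}\tau{:}N_{\max}\tau,x,u)$ meets $\mathcal{Q}$ up to $\theta$''} performed in lines 2.9 and 3.6 ensures that to every transition stored in $C^{\ast\ast}$ there corresponds a matching path of $\mathcal{Q}$ within precision $\theta\leq\mu_{x}$ (using (\ref{solving_condition_2})), so the relation that pairs each processed state with its $\mathcal{Q}$-witness is an $\varepsilon_0$--approximate simulation with $\varepsilon_0\leq\mu_{x}$. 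For (c), Theorem \ref{theo:termination} plus the recursive structure of \textbf{BuildTree} shows that $C^{\ast\ast}$ is only committed to a control $u$ at state $x$ when every successor reached under $u$ is itself either already in the domain of $C^{\ast\ast}$ (line 3.8) or returns $\mathrm{true}$ from the recursive call (line 3.13); a straightforward induction on the recursion depth then yields non-blocking.

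Once $C^{\ast\ast}\sqsubseteq C^{\ast}$ is in hand, I would apply Lemma \ref{lemma1}(ii) exactly as in the proof of Theorem \ref{Mmain}: combining $C^{\ast\ast}\preceq^{\alt}_{0}S_{\ast}(\Sigma)$ with $S_{\ast}(\Sigma)\preceq^{\alt}_{\theta}S(\Sigma)$ from Theorem \ref{polaut} gives $C^{\ast\ast}\preceq^{\alt}_{\theta}S(\Sigma)$, hence the maximal $A\theta A$ relation $\bar{\mathcal{R}}$ from $C^{\ast\ast}$ to $S(\Sigma)$ is non-empty. Lemma \ref{lemma1}(iii), the Proposition~2 of \cite{AB-TAC07} used in Theorem \ref{Mmain}, and condition (\ref{solving_condition_1}) together yield $S(\Sigma)\times^{\bar{\mathcal{R}}}_{\theta}C^{\ast\ast}\preceq_{\varepsilon}\mathcal{Q}$, and the non-blocking argument of Theorem \ref{Mmain} carries over word-for-word using non-blockingness of $C^{\ast\ast}$.

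The final step is to identify $S(\Sigma)\times^{\bar{\mathcal{R}}}_{\theta}C^{\ast\ast}$ with $S_{cl}(\Sigma)$. Since the composition restricts a transition $x^{1}\rTo^{u}_{\tau}x^{2}$ of $S(\Sigma)$ exactly to those matched by a transition of $C^{\ast\ast}$ under $\bar{\mathcal{R}}$, i.e.\ to those of the form $u=C^{\ast\ast}(x^{1}_{N_{1}})$, the projection on the first component of the composition coincides with $S_{cl}(\Sigma)$, and so the two statements transfer along this projection. The main obstacle I anticipate is pinning down the equivalence in this last step: one must verify that for every state of $S_{cl}(\Sigma)$ reachable from $X_{0,\tau}$ the pairing with a corresponding state of $C^{\ast\ast}$ is indeed provided by $\bar{\mathcal{R}}$, which relies on the initialization $\mathbf{X}_{target}=\{x_{p}\in X_{0,\ast}:\exists x_{q}\in X^{0}_{q},\ \Vert x_{p}-x_{q}\Vert\leq\theta\}$ in line 2.2 and on the fact that $\bar{\mathcal{R}}$ being maximal captures all such initial pairings.
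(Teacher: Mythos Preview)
Your plan routes the argument through Theorem~\ref{Mmain} by first certifying that $C^{\ast\ast}$ enjoys the defining properties of $C^{\ast}$; the paper instead argues directly from the structure of Algorithm~\ref{alg}: the tests in lines~2.9 and~3.6 guarantee the $\varepsilon$--simulation by $\mathcal{Q}$, and the recursive accept condition of \textbf{BuildTree} (lines~3.8, 3.9, 3.16) guarantees non-blocking. Your route is more systematic, but step~(b) contains a genuine error that breaks the chain.

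You claim that the check ``meets $\mathcal{Q}$ up to $\theta$'' yields matching paths ``within precision $\theta\leq\mu_{x}$ (using~(\ref{solving_condition_2}))''. Condition~(\ref{solving_condition_2}) says exactly the opposite: $\mu_{x}\leq\theta$. Hence the algorithm certifies at best $C^{\ast\ast}\preceq_{\theta}\mathcal{Q}$, not $C^{\ast\ast}\preceq_{\mu_{x}}\mathcal{Q}$, and you cannot conclude $C^{\ast\ast}\sqsubseteq C^{\ast}$ via the maximality in Definition~\ref{canon_contr}, since $C^{\ast}$ is characterized by $\preceq_{\mu_{x}}$ to $\mathcal{Q}$. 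If you try to push the argument of Theorem~\ref{Mmain} through with $\preceq_{\theta}$ in place of $\preceq_{\mu_{x}}$, the transitivity step delivers $S(\Sigma)\times^{\bar{\mathcal{R}}}_{\theta}C^{\ast\ast}\preceq_{2\theta}\mathcal{Q}$, and the available budget~(\ref{solving_condition_1}), namely $\mu_{x}+\theta\leq\varepsilon$, does not in general cover $2\theta$. A secondary issue is that $C^{\ast\ast}$ as produced by Algorithm~\ref{alg} is a \emph{map} $[X]_{\mu_{x}}\to[U]_{\mu_{u}}$ (cf.\ Proposition~\ref{Prep:SpaceComplexityComparision2}), not a sub-system of $S_{\ast}(\Sigma)$; your claims~(a) and the final identification step both presuppose a passage between the two viewpoints that you never spell out, and that is in fact where the substance of the direct argument lies. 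The paper avoids both difficulties by never leaving $S(\Sigma)$: every control pair stored by the algorithm has passed the $\theta$-proximity test against $\mathcal{Q}$, and \textbf{BuildTree} returns \emph{true} only when every reachable successor already has an assigned control.
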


\begin{proof}
Condition (1) is ensured by the conditions in lines 2.9 and 3.6, that are required for adding control pairs $(x,u)$ to the controller. The non-blocking condition (2) is ensured because function \textbf{BuildTree} returns true only if all the states that are reached in a time between $N_{\min} \tau$ and $N_{\max} \tau$ are already in the domain of the controller (lines 3.8, 3.9 and 3.16). This implies that an execution from those states is well-defined and fulfills the specification.
\end{proof}
Theorem \ref{theo:correctness} extends the results reported in \cite{BorriHSCC12} from stable nonlinear control systems to $\delta$-FC nonlinear NCS.
%
Finally, a comparison of the following results shows that the space complexity of Algorithm \ref{alg} is smaller than or equal to the one of Algorithm \ref{alg3}.


\begin{proposition}
The space complexity of Algorithm \ref{alg3} is \mbox{$O(|[X]_{\mu_x}|^{N_{\max}-N_{\min}+1})$}.
\label{Prep:SpaceComplexityComparision1}
\end{proposition}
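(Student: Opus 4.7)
The plan is to reduce the complexity question to counting $|X_{\ast}|$, since line 1.1 of Algorithm \ref{alg3} forces the entire symbolic model $S_{\ast}(\Sigma)$ to be built explicitly before line 1.3 can prune it against $\mathcal{Q}$. Any upper bound on the cardinality of $X_{\ast}$ therefore bounds the memory footprint of the algorithm up to a polynomial factor needed to store the transitions and the specification.

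First I would unfold the definition of $X_{\ast}$: by construction $X_{\ast} \subseteq [X_0 \cup X_e]_{\mu_x}$ with $X_e=\bigcup_{N\in[N_{\min};N_{\max}]} X^{N}$, so every state is either a length-one initial point in $[X_0]_{\mu_x}$ or a tuple of length $N\in[N_{\min};N_{\max}]$ with entries in $[X]_{\mu_x}$. Partitioning by tuple length and bounding each admissible tuple by a free element of $([X]_{\mu_x})^{N}$ yields
\[
|X_{\ast}| \;\leq\; |[X_0]_{\mu_x}| + \sum_{N=N_{\min}}^{N_{\max}} |[X]_{\mu_x}|^{N}.
\]
Evaluating this geometric sum over its $N_{\max}-N_{\min}+1$ admissible lengths and absorbing polynomial-in-$|[X]_{\mu_x}|$ prefactors into the hidden constant produces the claimed asymptotic bound $O(|[X]_{\mu_x}|^{N_{\max}-N_{\min}+1})$.

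To complete the argument I would then check that the remaining ingredients of $S_{\ast}(\Sigma)$ do not blow up the exponent. The transition relation $\rTo_{\ast}$ is a subset of $X_{\ast}\times U_{\ast}\times X_{\ast}$, contributing at most $|X_{\ast}|^{2}\cdot|U_{\ast}|$ triples, which is polynomial in the state count and hence does not change the leading exponent. The systems $\mathcal{Q}$ and $C^{\ast}$ computed in lines 1.2–1.3 are sub-structures whose storage is bounded by that of $S_{\ast}(\Sigma)$.

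The main obstacle I expect is to be precise about why the exponent collapses to $N_{\max}-N_{\min}+1$ rather than the naive $N_{\max}$ coming from the longest tuples in isolation: this requires observing that the Lyapunov constraints (\ref{eq:symb_states_1})–(\ref{eq:symb_states_2}) pin each coordinate $x^{\ast}_{i+1}$ to a bounded neighbourhood of $\mathbf{x}(\tau,x^{\ast}_{i},u^{-}_{\ast})$, so only the increment in length above $N_{\min}$ contributes genuinely free choices; the lower-order contributions from $[X_0]_{\mu_x}$ and from shorter tuple lengths are then routinely absorbed into the dominant term.
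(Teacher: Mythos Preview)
Your central computation contains an arithmetic slip that breaks the argument. The geometric sum you write down,
\[
|[X_0]_{\mu_x}| + \sum_{N=N_{\min}}^{N_{\max}} |[X]_{\mu_x}|^{N},
\]
is dominated by its largest term $|[X]_{\mu_x}|^{N_{\max}}$, not by $|[X]_{\mu_x}|^{N_{\max}-N_{\min}+1}$; the quantity $N_{\max}-N_{\min}+1$ is the \emph{number of summands}, not the exponent of the dominant one. So counting tuples freely yields only $O(|[X]_{\mu_x}|^{N_{\max}})$, which is strictly weaker than the stated bound whenever $N_{\min}>1$. Your attempt in the final paragraph to recover the missing factor via the Lyapunov constraints (\ref{eq:symb_states_1})--(\ref{eq:symb_states_2}) does not work as written either: those constraints chain \emph{every} coordinate $x^{\ast}_{i+1}$ to its predecessor, so if they ``pin'' anything they pin all $N-1$ successors uniformly; there is no mechanism in the constraints that singles out the increment above $N_{\min}$ as the only source of free choices. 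Taken at face value, your pinning argument would drive the exponent down to $1$ (plus input choices), not to $N_{\max}-N_{\min}+1$.

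The paper's proof does not go through a free-tuple count at all. It consists of a single assertion: the states of $S_{\ast}(\Sigma)$ have $N_{\max}-N_{\min}+1$ components, and the exponent is read off directly from that. In other words, the paper treats $N_{\max}-N_{\min}+1$ as the number of degrees of freedom per stored state and multiplies, rather than summing $|[X]_{\mu_x}|^{N}$ over admissible lengths $N$; the exponent is the \emph{input} to the argument, not the output of a reduction. Your route and the paper's are therefore genuinely different, and the gap above is precisely where your route fails to reach the stated exponent.
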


\begin{proof}
Algorithm \ref{alg3} requires the construction of the symbolic model $S_{\ast}(\Sigma)$ and the states of this model have $N_{\max}-N_{\min}+1$ components, implying a space complexity of \mbox{$O(|[X]_{\mu_x}|^{N_{\max}-N_{\min}+1})$}.
\end{proof}

\begin{proposition}
The space complexity of Algorithm \ref{alg} is \mbox{$O(|[X]_{\mu_x}|)$}.
\label{Prep:SpaceComplexityComparision2}
\end{proposition}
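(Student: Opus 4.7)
The plan is to bound separately the persistent data structures of Algorithms~\ref{alg} and~\ref{alg2} and the recursion depth of \textbf{BuildTree}, and to check that each is $O(|[X]_{\mu_x}|)$.

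First I would inventory the long-lived variables. The working set $\mathbf{X}_{target}$ initialized in Algorithm~\ref{alg} is a subset of $X_{0,\ast}=[X_0]_{\mu_x}\subseteq [X]_{\mu_x}$. The global set $Bad$ is populated only inside \textbf{BuildTree} with quantized points $[\mathbf{x}(N\tau,x,u)]_{\mu_x}\in [X]_{\mu_x}$. The controller $C^{\ast\ast}$ is written only through assignments of the form $C^{\ast\ast}(x)=u$ with $x\in [X]_{\mu_x}$ and $u\in [U]_{\mu_u}$, so it is a partial function from $[X]_{\mu_x}$ into the fixed set $[U]_{\mu_u}$. Each of these three data structures therefore has at most $|[X]_{\mu_x}|$ entries of constant size, contributing $O(|[X]_{\mu_x}|)$ space in total. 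The transient variables $\mathbf{U}_{target}$ and the boolean arrays $Flag_N$, $flag_N$ take values in $[U]_{\mu_u}$ or in the constant-length range $[N_{\min};N_{\max}]$, contributing only a constant factor.

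Next I would bound the recursion depth of \textbf{BuildTree}, reusing the argument from the proof of Theorem~\ref{theo:termination}: on entering \textbf{BuildTree}$(x)$ the assignment $C^{\ast\ast}(x)=u$ places $x$ in $\mathbf{Domain}(C^{\ast\ast})$ before any recursive descent is issued, so a nested call whose argument equals $x$ is short-circuited by the $\mathbf{Domain}(C^{\ast\ast})$-test; if the call fails, $x$ is added to $Bad$ and future calls on $x$ are short-circuited by the $Bad$-test. Consequently no element of $[X]_{\mu_x}$ can appear twice on the active call stack, so the stack depth is at most $|[X]_{\mu_x}|$, and each stack frame uses only $O(1)$ additional space for the current state, a local subset of $[U]_{\mu_u}$, and finitely many flags. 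Combining the two bounds yields the claimed $O(|[X]_{\mu_x}|)$ overall space complexity.

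The main subtlety I expect is the monotonicity of $\mathbf{Domain}(C^{\ast\ast})$ along an active chain of recursive calls, since the outer loop of Algorithm~\ref{alg} contains re-initializations $C^{\ast\ast}=\varnothing$. These re-initializations take place strictly before any \textbf{BuildTree} invocation is launched for the current $(x,u)$ pair, so within a live recursive chain $C^{\ast\ast}$ is only extended; making this invariant explicit is what legitimizes the re-entry block via the $\mathbf{Domain}(C^{\ast\ast})$-test and closes the space-complexity argument.
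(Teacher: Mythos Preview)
Your proof is correct and follows the same overall strategy as the paper: inventory the data structures and bound each by $|[X]_{\mu_x}|$. The paper's own proof is considerably terser; it only mentions that $C^{\ast\ast}$ is a function $[X]_{\mu_x}\to[U]_{\mu_u}$ keeping at most one input per state, that $Bad\subseteq[X]_{\mu_x}$, and then asserts that ``other variables have fixed sizes.''

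Your version is strictly more complete in one respect: you explicitly bound the recursion depth of \textbf{BuildTree} and the associated call-stack space, whereas the paper's proof is silent on this. Since \textbf{BuildTree} is recursive, the stack is not a fixed-size variable, and without your argument that no quantized state can occur twice on the active stack the claimed $O(|[X]_{\mu_x}|)$ bound would not actually follow from the paper's proof as written. Your observation about the re-initialization $C^{\ast\ast}=\varnothing$ occurring only outside any live recursive chain is a genuine subtlety that the paper does not address. So: same route, but you have filled in a gap the paper left open.
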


\begin{proof}
Algorithm \ref{alg} constructs a controller in form of a function $C:[X]_{\mu_{x}} \rightarrow[U]_{\mu_{u}}$ without requiring the construction of $S_{\ast}(\Sigma)$. Since the integrated controller keeps at most one input for each state, the complexity of that object is bounded by \mbox{$O(|[X]_{\mu_x}|)$}. The memory occupation of the set $Bad$ is also \mbox{$O(|[X]_{\mu_x}|)$}, while other variables have fixed sizes.
\end{proof}

%

\section{An Illustrative Example}\label{sec:example}

We consider the model of a unicycle $P$ described by the following differential equation:  
\begin{align}
\dot{x} &=\left[
\begin{array}
[l]{l}
\dot{x}_1\\
\dot{x}_2\\
\dot{x}_3
\end{array}
\right] =f(x,u)=
\left[
\begin{array}
[c]{c}
u_1 \cos (x_3)\\
u_1 \sin (x_3)\\
u_2
\end{array}
\right],\label{example1}
\end{align}

where the state $x$ belongs to the set $X=X_{0} =\left[-1,1\right[\times \left[-1,1\right[ \times \left[-\pi,\pi\right[$ and the control input $u$ belongs to the set $U =\left[-1,1\right[\times \left[-1,1\right[$. The state quantities are the $2$D-coordinates of the center of the vehicle and its orientation, while the inputs are the forward and angular velocity.
By choosing the quadratic Lyapunov-like function $V(x,x^{\prime})=0.5\, \Vert x-x^{\prime}\Vert_{2}^{2}$ it is possible to show that
control system (\ref{example1}) is $\delta$--FC.
The network/computation parameters are $B_{\max} =1 \text{ kbit}/s$, $\tau =0.2 s$, $\Delta_{\min}^{\ctrl}=0.001 s$, $\Delta_{\max}^{\ctrl} =0.01 s$, $\Delta^{\req}_{\max} =0.05 s$, $\Delta_{\min}^{\delay}=0.02 s$, $\Delta_{\max}^{\delay}=0.1 s$, resulting in $N_{\min}=1$, $N_{\max}=2$ from Eqn. (\ref{eq:N_min-max}).
In order to construct a symbolic model for $\Sigma$, we apply Theorem \ref{thmain}. Assumptions (H1)--(H2) are fulfilled for $P$ with $\lambda=2u_{1,\max}$ and $\gamma(r)=2\pi r$. For a precision $\varepsilon=0.15$, and the choice of parameters $\eta=0.11$, $\mu_x=0.02$ and $\mu_u=0.25$, the inequality in (\ref{bisim_condition1}) holds. 
We now consider a specification given in the form of a motion planning problem with respect to the position variables $x_1$ and $x_2$ of the unicycle. Starting from the origin, the vehicle is required to follow a trajectory visiting (in order) the $4$ regions of the plane $Z_1=[0,1[ \times [0,1[$, $Z_2=[-1,0[ \times [0,1[$, $
Z_3 =[-1,0[ \times [-1,0[$, and $Z_4 =[0,1[ \times [-1,0[$,
to finally go back to a neighbourhood of the origin.
For the choice of the interconnection parameter $\theta=0.9 \varepsilon$, Theorem \ref{Mmain} holds and the controller $C^{\ast}$ from Definition (\ref{canon_contr}) solves the control problem. We also solve the problem by means of the integrated procedure illustrated in Section \ref{sec:integrated} and in the following we compare the results in terms of the computational complexity needed to construct $C^{\ast}$ and $C^{\ast\ast}$.
The total memory occupation and time required to construct $C^{\ast\ast}$ are respectively $1345$ integers and $916\,$s. We did not compute the controller $C^{\ast}$; estimates of space complexity and time complexity in constructing $C^{\ast}$ result respectively in $5.8\cdot10^{12}$ integers and $4.19\cdot 10^6\,$s.
In Figures \ref{fig:sim1}--\ref{fig:sim2}, we show the simulation results for a particular realization of the network uncertainties: it is easy to see that the specifications are indeed met.

\begin{figure}[ht]
\begin{center}
\includegraphics[scale=0.7]{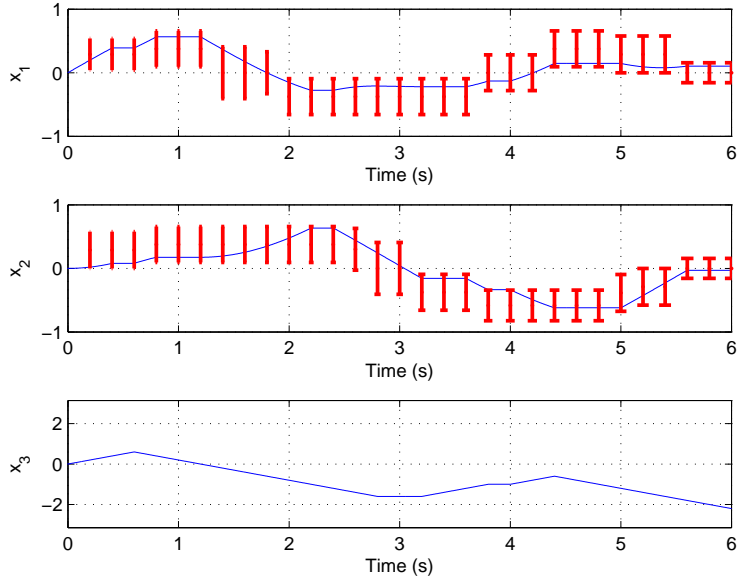}
\caption{State trajectory of the NCS $\Sigma$.}
\label{fig:sim1}
\end{center}
\end{figure}

\begin{figure}[ht]
\begin{center}
\includegraphics[scale=0.7]{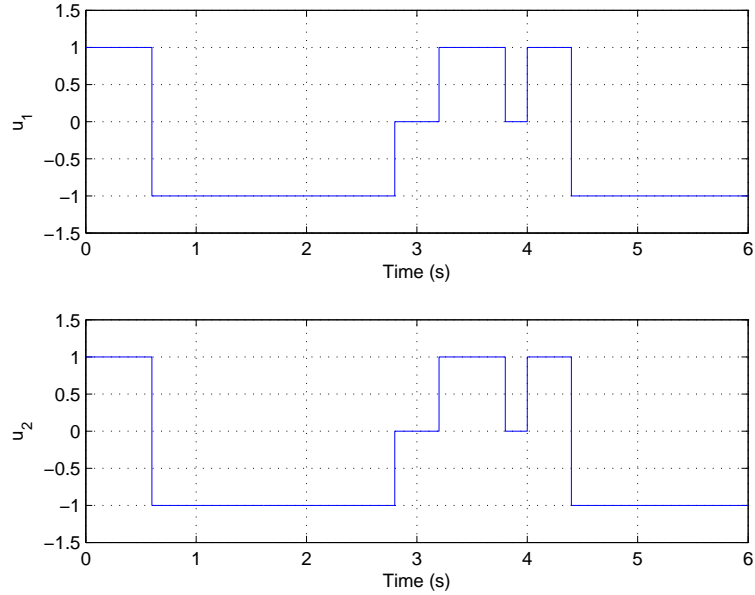}
\caption{Control input for the NCS $\Sigma$.}
\label{fig:sim2}
\end{center}
\end{figure}

\section{Conclusions}\label{sec:conclusion}
In this paper we proposed an integrated symbolic design approach to nonlinear NCS. Under the assumption of incremental forward completeness, symbolic models were derived which approximate NCS in the sense of (alternating) approximate simulation. Symbolic control design of NCS was then addressed where specifications are expressed in terms of automata. Finally efficient algorithms were proposed which integrate the construction of symbolic models with the design of robust symbolic controllers. 

\bibliographystyle{abbrv}
\bibliography{biblio1}

\end{document}